\theoremstyle{plain}
\newtheorem{theorem}{Theorem}
\newtheorem{lemma}{Lemma}
\newtheorem{corollary}{Corollary}
\theoremstyle{definition}
\newtheorem{remark}{Remark}
\begin{document}

\title{On one real basis for $L^{2}\left(\mathbb{Q}_{p}\right)$}

\author{A.\,Kh.~Bikulov, \thanks{Institute of Chemical Physics, Kosygina Street 4, 117734 Moscow, Russia, e-mail: bikulov1903@rambler.ru}
\\
 A.\,P.~Zubarev \thanks{Samara State Aerospace University, Samara, Russia and Samara State Transport University, Samara, Russia, e-mail: apzubarev@mail.ru}}
\maketitle
\begin{abstract}
We construct new bases of real functions from $L^{2}\left(B_{r}\right)$
and from $L^{2}\left(\mathbb{Q}_{p}\right)$. These functions are eigen\-functions of the $p$-adic pseudo-differential Vladimirov operator,
which is defined on a~compact set $B_{r}\subset\mathbb{Q}_{p}$ of the field of $p$-adic
numbers $\mathbb{Q}_{p}$ or, respectively, on the entire field $\mathbb{Q}_{p}$.
A~relation between the basis of  functions from $L^{2}\left(\mathbb{Q}_{p}\right)$
and the  basis of  $p$-adic wavelets from $L^{2}\left(\mathbb{Q}_{p}\right)$ is found.
As an application, we consider the solution of the Cauchy problem with the initial condition
on a~compact set for a~pseudo-differential equation with a~general pseudo-differential operator, which is diagonal in the basis constructed.

\end{abstract}

\section{Introduction}

The theory of $p$-adic pseudo-differential operators was developed by V.\,S.~Vladimirov~\cite{V}  (see also \cite{VVZ,DKKV} and the references given
therein). The operator $D^{\alpha}$, which is known in the literature on $p$-adic mathematical physics, was defined in~\cite{V} as a~pseudo-differential Vladimirov operator $D^{\alpha}$.
At present the pseudo-differential equations involving Vladimirov operator are widely useful in various branches of mathematics and physics: $p$-adic
evolution equations and path integrals \cite{S-Sh-1,S-Sh-2},
$p$-adic-valued random processes \cite{VVZ,Koch,S-H}, $p$-adic modeling of conformational dynamics of proteins \cite{ABKO,ABO,ABZ,ABZ_MIAN},
$p$-adic modeling of prototypes of molecular nano-machines \cite{ABZ_motor},
$p$-adic models of economic and social systems~\cite{BZK}, etc.

The Vladimirov operator is diagonalized by the $p$-adic Fourier transform.
There are also examples of bases consisting of compactly supported eigenfunctions of the Vladimirov operator.
Examples of bases of eigenfunctions of the Vladimirov operator were constructed in \cite{Koch,V1,V2,V3}.
In \cite{Koz} there is another example of an orthonormal basis consisting of eigenvectors
of the Vladimirov operator (which was called the basis of $p$-adic wavelets).

The Vladimirov operator and similar operators with translation invariant kernels that are diagonalized by the Fourier transform are also diagonalized by wavelets,
but a~wavelet basis is capable of diagonalizing even a~more general construction of pseudo-differential operators that cannot be diagonalized by the Fourier transform \cite{Koz,Koz1,KK}.

In the beginning of в Section~2 we outline the Fourier analysis on compact sets. This analysis was applied several times by the authors, but was never mathematically formalized.
We think that this machinery, which is widely applied in physical applications, deserves a~separate systematic treatment.
An expansion in a~discrete basis of characters on a~compact set is nothing else but the analogue of the discrete Fourier transform.
This basis of characters is an eigenbasis for the modified Vladimirov operator on a~compact set  $B_{r}\subset Q_{p}$, which has the following form:
\[
D^{\alpha}(B_{r})f(x)=\intop_{B_{r}}\frac{f(y)-f(x)}{|x-y|_{p}^{\alpha+1}}dy.
\]
This operator is distinct from the restriction of the Vladimirov operator to $B_{r}$,
which was considered in~\cite{VVZ}. This operator is more natural in applications of
the $p$-adic analysis to physics, and in particular, to modeling of conformational dynamics of protein
\cite{ABKO,ABO,ABZ,ABZ_MIAN,ABZ_motor}. As was pointed out above, expansions of functions in the basis under consideration
were used by the authors of the present paper in a~number of studies.
For a~first time such an expansion was constructed on~$\mathbb{Z}_{p}$ in~\cite{Bik_Vol}. In~\cite{ABKO},
a~solution of a~system of equations of the  `reaction-diffusion' type on~$B_{r}$ was put forward in terms of
an expansion in such a~basis. Further, in Section~3 we construct a~new real basis for $L^{2}\left(B_{r}\right)$. This
basis is a~complete family of eigenfunctions of the Vladimirov operator on~$B_{r}$ (and is also a~complete family of
eigenfunctions for more general operators). In Section~4 we consider an extension of this basis to a~basis for $L^{2}\left(\mathbb{Q}_{p}\right)$.
We shall be also concerned with the relation between this basis and the basis of $p$-adic
wavelets. As an application, we consider the solution of the Cauchy problem with the initial condition on a~compact set for pseudo-differential
equation with pseudo-differential operator of a~general form, which is diagonal in this basis.

\section{Fourier transform, convolution and spectrum of the Vladimirov operator on~$B_{r}$}

Let $\mathbb{Q}$ be the field of rational numbers and let $p$ be a~fixed prime number. Any rational number  $x\ne0$ is uniquely representable in the form
\begin{equation}
x=p^{\gamma}\frac{a}{b},\label{x}
\end{equation}
where $a, b, \gamma\in\mathbb{Z}$ are integers, $a$ and $b$ are natural numbers relative prime to~$p$ and having no common divisors.
The $p$-adic norm  $\left|x\right|_{p}$ of a~number $x\in\mathbb{Q}$ is defined by  $\left|x\right|_{p}=p^{-\gamma}$
, $\left|0\right|_{p}=0$. The completion of the field of rational numbers  $\mathbb{Q}$ with respect to the $p$-adic norm forms the field of $p$-adic
numbers $\mathbb{Q}_{p}$. Endowed with the metric $d(x,y)=\left|x-y\right|_{p}$ the field $\mathbb{Q}_{p}$ becomes a~complete compact separable extremally disconnected
ultrametric  space.

We shall denote by $B_{r}(a)=\{x\in\mathbb{Q}_{p}:\:|x-a|_{p}\leq p^{r}\}$
the ball of radius  $p^{r}$ centered at~$a$, denote by $S_{r}(a)=\{x\in\mathbb{Q}_{p}:\:|x-a|_{p}=p^{r}\}$
the sphere of radius $p^{r}$ with centre at~$a$, $B_{r}\equiv B_{r}(0)$,
$S_{r}\equiv S_{r}(a)$, $\mathbb{Z}_{p}\equiv B_{0}$. We shall also need the following functions
\[
\Omega(\left|x\right|_{p}p^{\gamma})=\left\{ \begin{array}{l}
1,\mathrm{\;}\left|x\right|_{p}p^{\gamma}\leq1,\\
0,\mathrm{\;}\left|x\right|_{p}p^{\gamma}>1,
\end{array}\right.\qquad
\delta(\left|x\right|_{p}-p^{\gamma})=\left\{ \begin{array}{l}
1,\mathrm{\;}\left|x\right|_{p}=p^{\gamma},\\
0,\mathrm{\;}\left|x\right|_{p}\neq p^{\gamma}.
\end{array}\right.
\]
On $\mathbb{Q}_{p}$ there exists a~unique (up to a~factor)
Haar measure $d_{p}x$ which is invariant with respect to translations $d_{p}\left(x+a\right)=d_{p}x$.
We assume that $d_px$ is a~full measure; that is,
\begin{equation}
\intop_{\mathbb{Z}_{p}}d_{p}x=1.\label{norm}
\end{equation}
Under this hypothesis the measure $d_{p}x$ is unique.

We now introduce the class $W_{l}^{\alpha}$ $(\alpha\geq0)$ of complex
functions $f(x)$ on~$\mathbb{Q}_{p}$ satisfying the following conditions:

1) $\left|\varphi(x)\right|\le C\left(1+\left|x\right|_{p}^{\alpha}\right)$,
where $C$ is a~real positive number;

2) there exists a~natural number $l$ such that $\varphi\left(x+x'\right)=\varphi\left(x\right)$
for any $x\in\mathbb{Q}_{p}$ and any $x'\in\mathbb{Q}_{p}$,  $\left|x'\right|_{p}\le p^{l}$.
A~function $\varphi(x)$ satisfying such condition is called locally constant, and the number~$l$ is called the exponent of local constancy of a~function.

Let $D$ be the class of compactly supported functions from $W^{0}$, which are called test functions
(Bruhat--Schwartz functions), and let  $D'$ be the class of all generalized functions on~$D$. We shall also denote by
$D_{r}^{l}$ the class of functions from $W_{l}^{0}$ supported in the ball  $B_{r}=\left\{ x:\:|x|_{p}\leq p^{r}\right\} $.

Next, let $\chi$ be the normalized additive character of the field~$\mathbb{Q}_{p}$.
We have $\chi\in W^{0}$. The Fourier transform of a~function $\varphi\left(x\right)\in L^{1}\left(\mathbb{Q}_{p},d_{p}x\right)$
is defined by
\[
\tilde{\varphi}\left(k\right)=F\left[\varphi\left(x\right)\right](k)\equiv\intop_{\mathbb{Q}_{p}}\chi\left(kx\right)\varphi\left(x\right)d_{p}x,\;k\in\mathbb{Q}_{p}.
\]
For a $\tilde{\varphi}\left(k\right)\in L^{1}\left(\mathbb{Q}_{p},d_{p}k\right)$, the inverse transform is defined by
\[
\varphi\left(x\right)=F^{-1}\left[\tilde{\varphi}\left(k\right)\right](x)\equiv\intop_{\mathbb{Q}_{p}}\chi\left(-kx\right)\tilde{\varphi}\left(k\right)d_{p}k,\;x\in\mathbb{Q}_{p}.
\]

We have the following result.

\begin{theorem}
If $f(x)\in D_{r}^{l}(\mathbb{Q}_{p})$, then  $\tilde{f}(k)\in D_{-l}^{-r}(\mathbb{Q}_{p})$.
\label{th1}
\end{theorem}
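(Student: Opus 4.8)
The plan is to verify directly the three defining properties of the class $D_{-l}^{-r}$ for the function $\tilde f$: boundedness, support inside $B_{-l}$, and local constancy with exponent $-r$. Boundedness is immediate, since $f$ is a bounded function supported on the ball $B_r$ of finite measure $p^{r}$, so $f\in L^{1}(\mathbb{Q}_p,d_px)$ and $|\tilde f(k)|\le\intop_{B_r}|f(x)|\,d_px\le Cp^{r}$ for every $k$. The two remaining properties are dual to one another under the Fourier transform: the local constancy of $f$ yields the support bound on $\tilde f$, while the compact support of $f$ yields the local constancy of $\tilde f$. Both reductions rest on translation invariance of $d_px$ together with the defining property of the normalized additive character, namely that $\chi(y)=1$ if and only if $y\in\mathbb{Z}_p$.

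For the support statement I would exploit the local constancy of $f$. Fix $x'\in\mathbb{Q}_p$ with $|x'|_p\le p^{l}$. Changing the integration variable $x\mapsto x+x'$ in the definition of $\tilde f$ and using $d_p(x+x')=d_px$ together with $f(x+x')=f(x)$ gives
\[
\tilde f(k)=\intop_{\mathbb{Q}_p}\chi\!\left(k(x+x')\right)f(x+x')\,d_px=\chi(kx')\,\tilde f(k),
\]
so that $\bigl(1-\chi(kx')\bigr)\tilde f(k)=0$ for every admissible $x'$. If $|k|_p>p^{-l}$, one may choose $x'$ with $|x'|_p=p^{l}$, whence $|kx'|_p=|k|_p\,p^{l}>1$, so $kx'\notin\mathbb{Z}_p$ and $\chi(kx')\ne1$. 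It follows that $\tilde f(k)=0$ whenever $|k|_p>p^{-l}$, i.e. $\mathrm{supp}\,\tilde f\subseteq B_{-l}$.

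For the local constancy of $\tilde f$ I would instead use the compact support of $f$. Let $|k'|_p\le p^{-r}$. Since $f$ vanishes outside $B_r$, on the domain of integration we have $|x|_p\le p^{r}$, hence $|k'x|_p\le p^{-r}p^{r}=1$, so $k'x\in\mathbb{Z}_p$ and $\chi(k'x)=1$. Therefore
\[
\tilde f(k+k')=\intop_{B_r}\chi(kx)\chi(k'x)f(x)\,d_px=\intop_{B_r}\chi(kx)f(x)\,d_px=\tilde f(k),
\]
which is exactly the assertion that $\tilde f$ is locally constant with exponent of local constancy $-r$. Combined with the previous paragraph and the boundedness, this places $\tilde f$ in $W_{-r}^{0}$ with support in $B_{-l}$, i.e. $\tilde f\in D_{-l}^{-r}$.

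The only genuine input, and the step one must pin down precisely, is the conductor of the normalized character: the vanishing of $\tilde f$ off $B_{-l}$ relies on the implication $kx'\notin\mathbb{Z}_p\Rightarrow\chi(kx')\ne1$, whereas the local constancy relies on $k'x\in\mathbb{Z}_p\Rightarrow\chi(k'x)=1$. Equivalently, one could package both facts into the standard evaluation $\intop_{B_N}\chi(kx)\,d_px=p^{N}\Omega(p^{N}|k|_p)$ and decompose $f$ into the finitely many indicator functions $\mathbf{1}_{B_l(a_j)}$ of the $B_l$-cosets partitioning $B_r$; I expect the translation-invariance argument above to be the cleaner route, with no real obstacle remaining once the normalization of $\chi$ is fixed.
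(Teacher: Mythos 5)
Your proof is correct, but there is nothing in the paper to measure it against line by line: Theorem~\ref{th1} is stated there \emph{without proof}, being quoted as a classical fact of Vladimirov's theory of $p$-adic test functions and distributions (see~\cite{V},~\cite{VVZ}). Judged on its own merits, your argument is complete and is the standard one: boundedness is trivial from $|f|\le C$ and the finite measure $p^{r}$ of $B_{r}$; the inclusion $\operatorname{supp}\tilde f\subseteq B_{-l}$ follows from the local constancy of $f$ via translation invariance of $d_{p}x$ and the identity $\bigl(1-\chi(kx')\bigr)\tilde f(k)=0$; and the local constancy of $\tilde f$ with exponent $-r$ follows from $\operatorname{supp}f\subseteq B_{r}$, since $|k'x|_{p}\le 1$ there whenever $|k'|_{p}\le p^{-r}$. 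Note that this last computation is exactly the one the paper does perform, as formula~(\ref{lc-1}) inside the proof of Theorem~\ref{th2}, to show that $\tilde\varphi\in W_{-r}^{0}$; the support half of your argument is the part the paper never needs (its Theorem~\ref{th2} treats a general $\varphi\in L^{1}(B_{r})$, which has no local constancy to exploit), and it is handled correctly. The one input you rightly single out --- that the normalized character satisfies $\chi(y)=1$ if and only if $y\in\mathbb{Z}_{p}$ --- is precisely what ``normalized'' means here (explicitly $\chi(y)=\exp\bigl(2\pi i\{y\}_{p}\bigr)$, trivial on $\mathbb{Z}_{p}$ and nontrivial off it), so no gap remains; your alternative route via $\intop_{B_{N}}\chi(kx)\,d_{p}x=p^{N}\Omega\bigl(p^{N}|k|_{p}\bigr)$ and the decomposition of $f$ into indicators of $B_{l}$-cosets would work equally well and is the variant found in the textbook treatments.
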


The operator $D^{\alpha}$ (the pseudo-differential Vladimirov operator
\cite{VVZ}), $\alpha>0$, acts on  functions $f\in W_{l}^{\beta}\left(\mathbb{Q}_{p}\right)$, $0\leq\beta<\alpha$, by the formula
\begin{equation}
D_{x}^{\alpha}f\left(x\right)=-\frac{1}{\Gamma\left(-\alpha\right)}\intop_{\mathbb{Q}_{p}}d_{p}y\frac{f\left(y\right)-f\left(x\right)}{\left\vert x-y\right\vert _{p}^{\alpha+1}},\label{Vlad_Q_p}
\end{equation}
where
\[
\Gamma_{p}(-\alpha)=\frac{1-p^{-\alpha-1}}{1-p^{\alpha}}
\]
is the $p$-adic analogue of the gamma-function.

We shall also define the analogue of operator (\ref{Vlad_Q_p}) on $W^{0}\left(B_{r}\right)$ by the formula
\begin{equation}
D^{\alpha}(B_{r})\varphi(x)\equiv-\frac{1}{\Gamma(-\alpha)}\intop_{B_{r}}d_{p}y\frac{\varphi(y,t)-\varphi(x,t)}{\left|y-x\right|_{p}^{\alpha+1}}.\label{Vlad}
\end{equation}

\begin{theorem}
Let $\varphi\left(x\right)\subset L^{1}\left(B_{r},d_{p}x\right)$. Then the function
\begin{equation}
\tilde{\varphi}\left(k\right)=p^{-r/2}\intop_{B_{r}}\chi\left(kx\right)\varphi\left(x\right)d_{p}x\label{phi(k)}
\end{equation}
lies in $W_{-r}^{0}$ {\rm (}that is, it is a function of $k\in\mathbb{Q}_{p}/B_{-r})$. Moreover,
\begin{equation}
\varphi\left(x\right)=p^{-r/2}\sum_{k\in\mathbb{Q}_{p}/B_{-r}}\chi\left(-kx\right)\tilde{\varphi}\left(k\right).\label{phi(x)}
\end{equation}
\label{th2}
\end{theorem}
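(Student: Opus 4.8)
The plan is to treat the two assertions separately: that $\tilde\varphi\in W^0_{-r}$ is elementary, while the inversion formula is the substantive part and rests on recognizing the partial sums as an averaging operator.

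First I would check $\tilde\varphi\in W^0_{-r}$. Boundedness is immediate, since $|\tilde\varphi(k)|\le p^{-r/2}\intop_{B_r}|\varphi(x)|\,d_px=p^{-r/2}\|\varphi\|_{L^1(B_r)}$, giving condition~1) with $\alpha=0$. For the local constancy, take $k'$ with $|k'|_p\le p^{-r}$. Because $\chi$ is trivial on $\mathbb{Z}_p=B_0$ and $|k'x|_p\le|k'|_p|x|_p\le p^{-r}p^{r}=1$ for every $x\in B_r$, we have $\chi(k'x)=1$ on $B_r$, whence
\[
\tilde\varphi(k+k')=p^{-r/2}\intop_{B_r}\chi(kx)\chi(k'x)\varphi(x)\,d_px=\tilde\varphi(k).
\]
Thus $\tilde\varphi$ is constant on the cosets of $B_{-r}$, i.e. it descends to a function on $\mathbb{Q}_p/B_{-r}$, which is exactly the meaning of $\tilde\varphi\in W^0_{-r}$.

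For the inversion formula I would read the sum over $\mathbb{Q}_p/B_{-r}$ as the limit of the partial sums over the finite quotients $B_N/B_{-r}$ as $N\to\infty$ (note that for $x\in B_r$ the product $\chi(-kx)\tilde\varphi(k)$ is coset-independent, by the same computation as above). Substituting (\ref{phi(k)}) into the finite sum and exchanging it with the integral gives
\[
S_N(x):=p^{-r/2}\sum_{k\in B_N/B_{-r}}\chi(-kx)\tilde\varphi(k)=p^{-r}\intop_{B_r}\varphi(y)\Big(\sum_{k\in B_N/B_{-r}}\chi(k(y-x))\Big)d_py.
\]
The inner character sum I would evaluate by regrouping the cosets back into an integral: since $\chi(k(y-x))$ is constant on cosets of $B_{-r}$ (again because $|y-x|_p\le p^{r}$) and each coset has measure $p^{-r}$,
\[
\sum_{k\in B_N/B_{-r}}\chi(k(y-x))=p^{r}\intop_{B_N}\chi(k(y-x))\,d_pk=p^{N+r}\,\Omega(p^{N}|y-x|_p),
\]
the last step being the standard formula $\intop_{B_N}\chi(\xi k)\,d_pk=p^{N}\Omega(p^{N}|\xi|_p)$.

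Plugging this back collapses the kernel to the indicator of the ball $B_{-N}(x)$: for $x\in B_r$ and $N\ge -r$ one has $B_{-N}(x)\subset B_r$ with $\intop_{B_{-N}(x)}d_py=p^{-N}$, so
\[
S_N(x)=p^{N}\intop_{B_{-N}(x)}\varphi(y)\,d_py,
\]
which is precisely the mean of $\varphi$ over the ball of radius $p^{-N}$ about $x$. As $N\to\infty$ these balls shrink to $\{x\}$, and $S_N$ is the conditional expectation of $\varphi$ relative to the $\sigma$-algebra generated by the balls of radius $p^{-N}$; by the (ultrametric) Lebesgue differentiation / martingale convergence theorem, $S_N\to\varphi$ almost everywhere and in $L^1(B_r)$, which yields (\ref{phi(x)}). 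I expect this convergence to be the only real obstacle: for a merely $L^1$ function the series must genuinely be summed as a limit of partial sums and recovered through the differentiation theorem, whereas for a locally constant $\varphi$ — the case relevant to the eigenfunction expansions of the sequel — Theorem~\ref{th1} forces $\tilde\varphi$ to have finite support in $k$, so $S_N$ stabilizes once $p^{-N}$ drops below the constancy radius and (\ref{phi(x)}) holds exactly as a finite sum.
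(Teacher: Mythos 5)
Your proof follows essentially the same route as the paper's: both verify local constancy from $\chi(k'x)=1$ on $B_r$, rewrite the partial sums over the finite quotients (your $B_N/B_{-r}$, the paper's $B_s/B_{-r}$) as averages of $\varphi$ over shrinking balls via the character-sum/kernel identity $\sum_{k}\chi(k(y-x))=p^{N+r}\,\Omega(p^{N}|y-x|_p)$, and conclude by letting the balls shrink to $\{x\}$. If anything you are more careful at the final step, correctly invoking the Lebesgue differentiation/martingale convergence theorem to get a.e.\ and $L^1$ convergence for merely integrable $\varphi$, where the paper simply asserts pointwise convergence of the averages.
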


\begin{proof}
Let $k'\in\mathbb{Q}_{p}$, $|k'|_{p}\le p^{-r}$. Then $\chi\left(k'x\right)=1$ with $|x|_{p}\le p^{r}$, and besides
\begin{equation}
\tilde{\varphi}\left(k+k'\right)=p^{-r/2}\intop_{B_{r}}\chi\left(kx\right)\chi\left(k'x\right)\varphi\left(x\right)d_{p}x=\tilde{\varphi}\left(k\right).\label{lc-1}
\end{equation}
Thus, $\tilde{\varphi}\left(k\right)$ is also locally constant with the exponent of local constancy  $l=-r$. Hence, we may assume that
 $k\in\mathbb{Q}_{p}/B_{-r}$.

For any $s>-r$ we consider the function
\[
\varphi_{s}\left(x\right)=p^{-r/2}\sum_{k\in B_{s}/B_{-r}}\chi\left(-kx\right)\tilde{\varphi}\left(k\right)=p^{-r/2}\sum_{k\in B_{s}/B_{-r}}\chi\left(-kx\right)\left[p^{-r/2}\intop_{B_{r}}\chi\left(ky\right)\varphi\left(y\right)d_{p}y\right].
\]
Since the integral converges uniformly in $k$, we have
\[
\varphi_{s}\left(x\right)=p^{-r}\intop_{B_{r}}\varphi\left(y\right)\left[\sum_{k\in B_{s}/B_{-r}}\chi\left(k\left(y-x\right)\right)\right]d_{p}y=
\]
\[
=\intop_{B_{r}}\varphi\left(y\right)\left[\intop_{B_{s}}\chi\left(k\left(y-x\right)\right)d_{p}k\right]d_{p}y=\intop_{B_{r}}\varphi\left(y\right)\left[p^{s}\Omega\left(\left|x-y\right|_{p}p^{s}\right)\right]d_{p}y.
\]
Taking into account that
\[
\intop_{B_{r}}\left[p^{s}\Omega\left(\left|x-y\right|_{p}p^{s}\right)\right]d_{p}y=1,
\]
this establishes
\[
\varphi_{s}\left(x\right)-\varphi\left(x\right)=\intop_{B_{r}}\left(\varphi\left(y\right)-\varphi\left(x\right)\right)\left[p^{s}\Omega\left(\left|x-y\right|_{p}p^{s}\right)\right]d_{p}y=
\]
\[
=\intop_{|x-y|_{p}\leq p^{-s}}\left(\varphi\left(y\right)-\varphi\left(x\right)\right)d_{p}y\rightarrow0
\]
as  $s\rightarrow\infty$. It follows that $\lim_{s\rightarrow\infty}\varphi_{s}\left(x\right)=\varphi\left(x\right)$
and so we arrive to~(\ref{phi(x)}), completing the proof of Theorem~\ref{th2}.
\end{proof}

From Theorem 2 it follows that the functions  $\chi\left(kx\right)$ form a~complete
system of functions in $L^{1}\left(B_{r},d_{p}x\right)$.
These functions are mutually orthogonal in $L^{2}\left(B_{r},d_{p}x\right)$. Moreover, since
\[
\intop_{B_{r}}d_{p}x\left[p^{-r/2}\chi\left(-k'x\right)\right]\left[p^{-r/2}\chi\left(kx\right)\right]=p^{-r}\intop_{B_{r}}d_{p}x\chi\left(\left(k-k'\right)x\right)=\delta_{k,k'}
\]
we have the following result.

\begin{corollary}
The family of functions $p^{-r/2}\chi\left(kx\right)$,
$k=\cdots+a_{i-1}p^{i-1}+a_{i}p^{i}+a_{i+1}p^{i+1}+\cdots+a_{r-1}p^{r-1}\in\mathbb{Q}_{p}/B_{-r}$,
$a_{-}=0,1,\ldots.p-1$, forms a~countable orthonormal basis for $L^{2}\left(B_{r},d_{p}x\right)$.
\label{col1}
\end{corollary}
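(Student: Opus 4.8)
The plan is to verify three things: that the system $\left\{p^{-r/2}\chi(kx)\right\}_{k\in\mathbb{Q}_{p}/B_{-r}}$ is orthonormal, that its index set is countable, and that its closed linear span is all of $L^{2}(B_{r},d_{p}x)$. Orthonormality is exactly the identity displayed just before the statement, which I would reduce to the standard character integral
\[
\intop_{B_{r}}\chi(mx)\,d_{p}x=\begin{cases}p^{r},& |m|_{p}\le p^{-r},\\ 0,& |m|_{p}> p^{-r}.\end{cases}
\]
For $m=k-k'$ with $k,k'\in\mathbb{Q}_{p}/B_{-r}$, the two cosets coincide precisely when $|k-k'|_{p}\le p^{-r}$; thus distinct cosets give $|m|_{p}>p^{-r}$ and a vanishing integral, while $k=k'$ gives $p^{r}$, and the prefactor $p^{-r}$ normalizes this to $\delta_{k,k'}$. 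Countability is immediate: each coset has a unique representative whose $p$-adic expansion involves only digits of order $\le r-1$, and since every $p$-adic number has valuation bounded below, this representative is a \emph{finite} sum $\sum_{i=N}^{r-1}a_{i}p^{i}$, so $\mathbb{Q}_{p}/B_{-r}$ is a countable set.

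The substantive point is completeness, and here I would lean on Theorems~\ref{th1} and~\ref{th2}. Since $B_{r}$ has finite Haar measure $p^{r}$, we have $L^{2}(B_{r},d_{p}x)\subset L^{1}(B_{r},d_{p}x)$, so the expansion~(\ref{phi(x)}) of Theorem~\ref{th2} is available for every $f\in L^{2}(B_{r})$. The cleanest route to the basis property is to observe that the finite linear combinations of the characters are exactly the locally constant functions supported in $B_{r}$. Indeed, by Theorem~\ref{th1} a function $\varphi\in D_{r}^{l}$ has $\tilde{\varphi}$ supported in $B_{-l}$, so only the finitely many cosets $k\in B_{-l}/B_{-r}$ contribute to~(\ref{phi(x)}); conversely, each $\chi(kx)$ with $|k|_{p}\le p^{-l}$ is constant on balls of radius $p^{l}$, so every such finite sum is locally constant and supported in $B_{r}$. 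A dimension count ($p^{r-l}$ characters versus $p^{r-l}$ balls of radius $p^{l}$ in $B_{r}$) shows these two spaces coincide. As $l\to-\infty$ these spaces exhaust all Bruhat--Schwartz functions on $B_{r}$, which are dense in $L^{2}(B_{r})$. Hence the linear span of the orthonormal system is dense, and an orthonormal system with dense span is an orthonormal basis.

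I expect the main obstacle to be the gap between the mode of convergence supplied by Theorem~\ref{th2} and the one required here: Theorem~\ref{th2} is proved as a pointwise statement for $L^{1}$ functions (the error $\intop_{|x-y|_{p}\le p^{-s}}(\varphi(y)-\varphi(x))\,d_{p}y$ tends to $0$), whereas ``orthonormal basis of $L^{2}$'' is a statement about $L^{2}$-norm convergence and Parseval's identity. The density argument above circumvents this difficulty: it only uses Theorem~\ref{th1} on locally constant functions, where the Fourier series terminates and reproduces the function exactly, so no delicate limiting procedure in the $L^{2}$ norm is needed. With orthonormality and density of the span in hand, completeness---equivalently, that the only $f\in L^{2}(B_{r})$ orthogonal to all $\chi(kx)$ is $f=0$---follows from standard Hilbert-space theory, finishing the proof.
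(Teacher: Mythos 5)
Your proposal is correct, and its treatment of completeness is genuinely different from the paper's. The orthonormality step coincides with the paper's: both reduce to the character integral $\intop_{B_{r}}\chi\left(\left(k-k'\right)x\right)d_{p}x=p^{r}\delta_{k,k'}$ for coset representatives. For completeness, however, the paper simply invokes Theorem~\ref{th2}: since the expansion~(\ref{phi(x)}) holds for every $\varphi\in L^{1}\left(B_{r},d_{p}x\right)$, the characters are declared a complete system, and the corollary follows with no further argument. You instead identify the finite linear spans of the characters with the spaces $D_{r}^{l}$ of locally constant functions supported in $B_{r}$ (each character $\chi(kx)$ with $|k|_{p}\le p^{-l}$ lies in $D_{r}^{l}$, and a dimension count --- $p^{r-l}$ orthogonal, hence independent, characters inside the $p^{r-l}$-dimensional space $D_{r}^{l}$ --- forces equality), and then conclude from density of Bruhat--Schwartz functions in $L^{2}\left(B_{r}\right)$ plus the standard fact that an orthonormal system with dense span is a basis. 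This buys something real: Theorem~\ref{th2} as proved is a pointwise statement for $L^{1}$ functions (indeed its limiting step $\intop_{|x-y|_{p}\leq p^{-s}}\left(\varphi\left(y\right)-\varphi\left(x\right)\right)d_{p}y\rightarrow0$ is only guaranteed at Lebesgue points for general $\varphi\in L^{1}$), so passing from it to $L^{2}$-norm convergence and the basis property requires exactly the bridge the paper leaves implicit; your route avoids any limiting procedure because on locally constant functions the expansion is a finite, exact sum. What the paper's route buys is brevity and reuse of the theorem already proved; it can also be repaired along its own lines by noting that any $f\in L^{2}\left(B_{r}\right)\subset L^{1}\left(B_{r}\right)$ orthogonal to all characters has all Fourier coefficients zero, whence $f=0$ almost everywhere by Theorem~\ref{th2} --- which is essentially the dual form of your density argument.
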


\begin{theorem}
Let $f\left(x\right),\:g\left(x\right)\in L^{2}\left(B_{r},d_{p}x\right)$ and let
$$
\tilde{f}\left(k\right)=p^{-r/2}\intop_{B_{r}}\chi\left(kx\right)f\left(x\right)d_{p}x,\qquad
\tilde{g}\left(k\right)=p^{-r/2}\intop_{B_{r}}\chi\left(kx\right)g\left(x\right)d_{p}x.$$
Then
\begin{equation}
f\left(x\right)g(x)=p^{-r/2}\sum_{\zeta\in\mathbb{Q}_{p}/B_{-r}}\chi\left(-\zeta x\right)\widetilde{\left[fg\right]}(\zeta),\label{conv}
\end{equation}
where
\begin{equation}
\widetilde{\left[fg\right]}(\zeta)=p^{-r/2}\sum_{k\in\mathbb{Q}_{p}/B_{-r}}\tilde{f}\left(k\right)\tilde{g}\left(\zeta-k\right)\label{conv1}
\end{equation}
and
\begin{equation}
\intop_{B_{r}}f\left(y\right)g\left(x-y\right)d_{p}y=\sum_{k\in\mathbb{Q}_{p}/B_{-r}}\chi\left(-kx\right)\tilde{f}\left(k\right)\tilde{g}\left(k\right).\label{conv2}
\end{equation}
\label{th3}
\end{theorem}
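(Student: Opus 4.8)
The plan is to treat the three identities in turn, deriving each from the Fourier expansion of Theorem~\ref{th2} together with the orthogonality relation established in Corollary~\ref{col1}. First observe that, since $f,g\in L^{2}\left(B_{r},d_{p}x\right)$ and $B_{r}$ has finite measure, the product $fg$ lies in $L^{1}\left(B_{r},d_{p}x\right)$ by the Cauchy--Schwarz inequality. Hence Theorem~\ref{th2} applies to the function $h(x)=f(x)g(x)$, and equation~(\ref{conv}) is precisely the expansion~(\ref{phi(x)}) written for $h$, with $\widetilde{\left[fg\right]}(\zeta)$ given by~(\ref{phi(k)}). Thus~(\ref{conv}) needs no separate argument once~(\ref{conv1}) identifies the coefficients.

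To establish~(\ref{conv1}) I would start from the defining formula
\[
\widetilde{\left[fg\right]}(\zeta)=p^{-r/2}\intop_{B_{r}}\chi\left(\zeta x\right)f(x)g(x)\,d_{p}x,
\]
substitute the expansion~(\ref{phi(x)}) for the single factor $f(x)=p^{-r/2}\sum_{k}\chi\left(-kx\right)\tilde{f}(k)$, and interchange summation and integration. The inner integral $\intop_{B_{r}}\chi\left((\zeta-k)x\right)g(x)\,d_{p}x$ equals $p^{r/2}\tilde{g}(\zeta-k)$ by the definition of $\tilde{g}$, so after collecting powers of $p$ one arrives at~(\ref{conv1}) directly. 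For~(\ref{conv2}), note that $B_{r}$ is an additive subgroup of $\mathbb{Q}_{p}$, so the shift $x-y$ remains in $B_{r}$ and the convolution is well defined; here I would expand only the factor $g(x-y)=p^{-r/2}\sum_{k}\chi\left(-kx\right)\chi\left(ky\right)\tilde{g}(k)$, pull the $x$-dependent character out of the $y$-integral, and recognize $\intop_{B_{r}}\chi\left(ky\right)f(y)\,d_{p}y=p^{r/2}\tilde{f}(k)$, which yields~(\ref{conv2}).

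The main obstacle is justifying the interchange of the infinite sum with the integral, since $g$ (respectively $f$) need not be bounded. The clean remedy is to argue through the partial sums $\varphi_{s}$ of Theorem~\ref{th2}: each truncated integral is a finite sum to which Fubini applies trivially, and because the partial sums of $f$ converge to $f$ in $L^{2}\left(B_{r}\right)$ while $\chi\left(\zeta\,\cdot\right)g\in L^{2}\left(B_{r}\right)$, the Cauchy--Schwarz inequality forces the corresponding integrals to converge to the desired limit. The limiting discrete convolution $\sum_{k}\tilde{f}(k)\tilde{g}(\zeta-k)$ converges absolutely because $\tilde{f},\tilde{g}\in\ell^{2}\left(\mathbb{Q}_{p}/B_{-r}\right)$ by Corollary~\ref{col1}, so no conditional-summation issues arise. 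The same passage to the limit handles~(\ref{conv2}), completing the proof.
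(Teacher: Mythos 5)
Your proposal is correct, and it takes a recognizably different route from the paper's, though both rest on the same two ingredients (Theorem~\ref{th2} and Corollary~\ref{col1}). The paper expands \emph{both} factors as the finite partial sums $f_{s}$, $g_{s}$ from the proof of Theorem~\ref{th2}: for (\ref{conv})--(\ref{conv1}) it multiplies the two finite sums and rearranges the resulting double sum via the change of variables $\zeta=k+\xi$ (a Cauchy-product argument), and for (\ref{conv2}) it integrates the product of the two expansions and invokes the orthogonality relation $\intop_{B_{r}}\chi\left(\left(k-k'\right)y\right)d_{p}y=p^{r}\delta_{k,k'}$, obtaining the identities in the limit $s\rightarrow\infty$. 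You instead expand only \emph{one} factor and integrate it against the other, so the inner integral is recognized as a Fourier coefficient directly from its definition; no double sum and no orthogonality computation is needed, and (\ref{conv}) is decoupled as a direct application of Theorem~\ref{th2} to $fg\in L^{1}\left(B_{r},d_{p}x\right)$ (Cauchy--Schwarz), an observation the paper never makes explicit. The main thing your route buys is rigour at the limit interchanges, which the paper leaves informal: the pointwise identity $f(x)g(x)=\lim_{s}f_{s}(x)g_{s}(x)$ and the passage $s\rightarrow\infty$ in the rearranged double sum are asserted rather than justified there, whereas your combination of $L^{2}$-convergence of the partial sums (which follows from Corollary~\ref{col1}), Cauchy--Schwarz for the sum--integral interchange, and the $\ell^{2}$ bound $\sum_{k}|\tilde{f}(k)\,\tilde{g}(\zeta-k)|\leq\|\tilde{f}\|_{\ell^{2}}\|\tilde{g}\|_{\ell^{2}}$ makes the discrete convolution absolutely convergent and every interchange legitimate. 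What the paper's route buys in exchange is that each stage is a purely finite algebraic manipulation, needing neither Parseval nor the fact $fg\in L^{1}$, and it treats the two expansions symmetrically, which is how the symmetric form of (\ref{conv1}) emerges naturally. If you write your sketch up in full, the one step to make explicit is that in (\ref{conv2}) the substitution $y\mapsto x-y$ with $x\in B_{r}$ preserves $B_{r}$ and the Haar measure, so $g_{s}\left(x-\cdot\right)\rightarrow g\left(x-\cdot\right)$ in $L^{2}\left(B_{r}\right)$; you allude to this, but it is the point where translation invariance carries the weight.
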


\begin{proof}
For any $s>-r$ we consider the functions  $f_{s}\left(x\right)=p^{-r/2}\sum_{k\in B_{s}/B_{-r}}\chi\left(-kx\right)\tilde{f}\left(k\right)$
and $g_{s}\left(x\right)=p^{-r/2}\sum_{k\in B_{s}/B_{-r}}\chi\left(-kx\right)\tilde{g}\left(k\right)$. We have
\[
f\left(x\right)g(x)=\lim_{s\rightarrow\infty}f_{s}\left(x\right)g_{s}\left(x\right).
\]
Each of the sums in $f_{s}\left(x\right)$ and $g_{s}\left(x\right)$ contains a~finite number of terms, and hence, writing
\[
f\left(x\right)g\left(x\right)=p^{-r}\lim_{s\rightarrow\infty}\sum_{\xi\in B_{s}/B_{-r}}\sum_{k\in B_{s}/B_{-r}}\chi\left(-(k+\xi)x\right)\tilde{f}\left(k\right)\tilde{g}\left(\xi\right),
\]
one may change variables to  $k+\xi=\zeta\in B_{s}/B_{-r}$. Consequently, making $s\rightarrow\infty$, we see that
\[
f\left(x\right)g(x)=p^{-r/2}\sum_{\zeta\in Q_{p}/B_{-r}}\chi\left(-\zeta x\right)p^{-r/2}\sum_{k\in\mathbb{Q}_{p}/B_{-r}}\tilde{f}\left(k\right)\tilde{g}\left(\zeta-k\right),
\]
which completes the proof of (\ref{conv1}).

In order to prove (\ref{conv2}) we observe that
\[
\intop_{B_{r}}f\left(y\right)g\left(x-y\right)d_{p}x=\lim_{s\rightarrow\infty}\intop_{B_{r}}f_{s}\left(y\right)g_{s}\left(x-y\right)d_{p}x=
\]
\[
=p^{-r}\lim_{s\rightarrow\infty}\intop_{B_{r}}\left[\sum_{k\in B_{s}/B_{-r}}\chi\left(-ky\right)\tilde{f}\left(k\right)\right]\left[\sum_{k'\in B_{s}/B_{-r}}\chi\left(-k'\left(x-y\right)\right)\tilde{g}\left(k'\right)\right]d_{p}x=
\]
\[
=p^{-r}\lim_{s\rightarrow\infty}\intop_{B_{r}}\left[\sum_{k\in B_{s}/B_{-r}}\sum_{k'\in B_{s}/B_{-r}}\chi\left(\left(k-k'\right)y\right)\tilde{f}\left(k\right)\right]\left[\chi\left(-k'x\right)\tilde{g}\left(k'\right)\right]d_{p}x.
\]
Since the sums are finite, we may write
\[
\intop_{B_{r}}f\left(y\right)g\left(x-y\right)d_{p}x=
\]
\[
=p^{-r}\lim_{s\rightarrow\infty}\sum_{k\in B_{s}/B_{-r}}\sum_{k'\in B_{s}/B_{-r}}\left[\intop_{B_{r}}\chi\left(\left(k-k'\right)y\right)d_{p}x\right]\tilde{f}\left(k\right)\tilde{g}\left(k'\right)\chi\left(-k'x\right).
\]
Taking into account that $\intop_{B_{r}}\chi\left(\left(k-k'\right)y\right)d_{p}x=p^{r}\delta_{k,k'}$ and making $s\rightarrow\infty$, we obtain
\[
\intop_{B_{r}}f\left(y\right)g\left(x-y\right)d_{p}x=
\]
\[
=\sum_{k\in\mathbb{Q}_{p}/B_{-r}}\tilde{f}\left(k\right)\tilde{g}\left(k\right)\chi\left(-kx\right),
\]
concluding the proof of the theorem.
\end{proof}

\begin{lemma}
The functions $p^{-r/2}\chi\left(kx\right)$, $k\in\mathbb{Q}_{p}/B_{-r}$ are eigenfunctions of operator (\ref{Vlad}) with eigenvalues
\[
-|k|_{p}^{\alpha}+(1-p^{-1})\frac{p^{-\alpha r}}{1-p^{-\alpha-1}}\left(1-\delta_{k,0}\right).
\]
\label{lem1}
\end{lemma}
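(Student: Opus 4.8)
The plan is to apply the operator directly to $\varphi(x)=\chi(kx)$ and exploit the group structure of $B_r$. Since $B_r$ is an additive subgroup of $\mathbb{Q}_p$, for fixed $x\in B_r$ the change of variable $z=y-x$ carries $B_r$ onto itself, and because $\chi$ is a character one has $\chi(ky)=\chi(kx)\chi(kz)$. Substituting into (\ref{Vlad}) factors out $\chi(kx)$ and yields
\[
D^{\alpha}(B_r)\chi(kx)=\Bigl[-\frac{1}{\Gamma(-\alpha)}\intop_{B_r}\frac{\chi(kz)-1}{|z|_p^{\alpha+1}}\,d_pz\Bigr]\chi(kx),
\]
so $\chi(kx)$ is indeed an eigenfunction and the whole task reduces to evaluating the $x$-independent integral in the bracket, call it $\lambda_k$.

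For $k=0$ the integrand vanishes identically, giving $\lambda_0=0$, which agrees with the stated formula through the factor $(1-\delta_{k,0})$ (together with $|0|_p^{\alpha}=0$). For $k\neq 0$, I would first observe that $\chi(kz)-1=0$ whenever $|z|_p\le |k|_p^{-1}$, so the integration is effectively restricted to the annulus $|k|_p^{-1}<|z|_p\le p^r$; note that a nonzero class in $\mathbb{Q}_p/B_{-r}$ satisfies $|k|_p\ge p^{1-r}$, so this annulus is nonempty. Decomposing it into spheres $S_j=\{|z|_p=p^j\}$ with $1-m\le j\le r$, where $p^m=|k|_p$, I would use $\mathrm{meas}(S_j)=p^j(1-p^{-1})$ together with the elementary identity $\intop_{B_N}\chi(kz)\,d_pz=p^N\Omega(p^N|k|_p)$ (already used in the proof of Theorem~\ref{th2}) to compute the spherical integrals $\intop_{S_j}\chi(kz)\,d_pz$ by differencing consecutive balls. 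The decisive simplification is that these spherical character-integrals vanish on every sphere of the annulus except the innermost one $S_{1-m}$, where the contribution equals $-|k|_p^{-1}$.

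With this in hand the integral splits into two pieces. The $\chi(kz)$ part collapses to the single term coming from $S_{1-m}$, which I expect to equal $-p^{-\alpha-1}|k|_p^{\alpha}$, while the $-1$ part is the geometric sum $-(1-p^{-1})\sum_{j=1-m}^{r}p^{-j\alpha}$, which I would evaluate in closed form and re-express through $|k|_p^{\alpha}$ and $p^{-\alpha r}$. The last step is to multiply by $-1/\Gamma(-\alpha)$ and substitute $\Gamma(-\alpha)=(1-p^{-\alpha-1})/(1-p^{\alpha})$; after cancellation the coefficient of $|k|_p^{\alpha}$ should reduce exactly to $-1$ and the coefficient of $p^{-\alpha r}$ to $(1-p^{-1})/(1-p^{-\alpha-1})$, reproducing the claimed eigenvalue. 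I expect the genuine obstacle to lie not in the conceptual setup but in this final bookkeeping: keeping the geometric-series endpoints correct, tracking which sphere carries the surviving character-integral, and carrying out the algebraic cancellation with $\Gamma(-\alpha)$ so that the two terms separate cleanly.
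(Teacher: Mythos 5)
Your proposal is correct and follows essentially the same route as the paper: factor out $\chi(kx)$ via the change of variable $z=y-x$ (using that $B_r$ is a group and $\chi$ a character), reduce to the $x$-independent integral $\intop_{B_r}\frac{\chi(kz)-1}{|z|_p^{\alpha+1}}\,d_pz$, and evaluate it. The paper simply asserts the value of this integral, whereas you outline its computation by sphere decomposition; your predicted intermediate values ($-p^{-\alpha-1}|k|_p^{\alpha}$ from the innermost sphere $S_{1-m}$, plus the geometric sum $-(1-p^{-1})\sum_{j=1-m}^{r}p^{-j\alpha}$) are exactly right and, after dividing by $-\Gamma_p(-\alpha)$, reproduce the stated eigenvalue.
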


\begin{proof}
Consider the action of  $D^{\alpha}(B_{r})$ on $\chi(kx)$ with $k\in\mathbb{Q}_{p}/B_{-r}$ and $x\in B_{r}$. If
$k=0$, then  $D^{\alpha}(B_{r})\chi(kx)=0$. Next, if $|k|_{p}>p^{-r}$, then
\[
D^{\alpha}(B_{r})\chi(kx)=-\frac{1}{\Gamma(-\alpha)}\intop_{B_{r}}d_{p}y\frac{\Omega\left(|y|_{p}p^{-r}\right)\chi(ky)-\Omega\left(|x|_{p}p^{-r}\right)\chi(kx)}{\left\vert y-x\right\vert _{p}^{\alpha+1}}=
\]
\[
=-\frac{1}{\Gamma(-\alpha)}\intop_{B_{r}}d_{p}y\frac{\chi(ky)-\chi(kx)}{\left\vert y-x\right\vert _{p}^{\alpha+1}}=-\frac{\chi(kx)}{\Gamma(-\alpha)}\int_{B_{r}}d_{p}y\frac{\chi(ky)-1}{\left\vert y\right\vert _{p}^{\alpha+1}}.
\]
Evaluating the integral, we get
\[
\intop_{B_{r}}d_{p}y\frac{\chi(ky)-1}{\left\vert y\right\vert _{p}^{\alpha+1}}=\left(|k|_{p}^{\alpha}\Gamma(-\alpha)+(1-p^{-1})\frac{p^{-\alpha(r+1)}}{1-p^{-\alpha}}\right).
\]
As a result,
\[
D^{\alpha}(B_{r})\chi(kx)=-\left(|k|_{p}^{\alpha}-(1-p^{-1})\frac{p^{-\alpha r}}{1-p^{-\alpha-1}}\right)\chi(kx),
\]
This completes the proof of Lemma~\ref{lem1}.
\end{proof}

\section{A real basis for $L^{2}\left(B_{r}\right)$}

From Corollary \ref{col1} and Lemma \ref{lem1} we have the following result.

\begin{corollary}
Let $\phi_{\gamma}\left(k\right)$ be some family of functions from $L^{1}\left(\mathbb{Q}_{p}/B_{-r}\right)$,
$\gamma\leq r$, and let $\phi_{\gamma}\left(k\right)=\phi_{\gamma}^{\ast}\left(-k\right)$.
Then the functions
\[
f_{\gamma}\left(x\right)=\sum_{k\in\mathbb{Q}_{p}/B_{-r}}
\delta\left(|k|_{p}-p^{-\gamma+1}\right)\phi_{\gamma}\left(k\right)\chi(kx),\;\gamma\leq r,
\]
are orthogonal real eigenfunctions of operator (\ref{Vlad}) with eigenvalues
\begin{equation}
\lambda_{\gamma}=-p^{-\alpha\left(\gamma-1\right)}+(1-p^{-1})\frac{p^{-\alpha r}}{1-p^{-\alpha-1}}.
\label{EV}
\end{equation}
\label{col2}
\end{corollary}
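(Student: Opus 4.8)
The plan is to observe that, for each fixed $\gamma\le r$, the function $f_\gamma$ is a finite linear combination of the orthonormal basis functions $\chi(kx)$ of Corollary~\ref{col1}, the sum running over a single ``sphere'' in the dual variable, and then to read off all three claimed properties from this description together with Lemma~\ref{lem1}.

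First I would pin down the index set. The factor $\delta(|k|_p-p^{-\gamma+1})$ restricts the sum to those $k\in\mathbb{Q}_p/B_{-r}$ with $|k|_p=p^{-\gamma+1}$. Since $\gamma\le r$ gives $-\gamma+1\ge -r+1>-r$, every such $k$ satisfies $|k|_p>p^{-r}$ and is therefore a nonzero element of the quotient; moreover a short measure count shows there are exactly $(p-1)p^{r-\gamma}$ of them, so the defining sum is finite and no convergence question arises (the hypothesis $\phi_\gamma\in L^1(\mathbb{Q}_p/B_{-r})$ is then automatic on this finite support).

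Next comes the eigenvalue. By Lemma~\ref{lem1} each $\chi(kx)$ appearing in $f_\gamma$ is an eigenfunction of $D^\alpha(B_r)$; because every such $k$ is nonzero, the factor $(1-\delta_{k,0})$ equals $1$ and the eigenvalue reduces to $-|k|_p^\alpha+(1-p^{-1})p^{-\alpha r}/(1-p^{-\alpha-1})$. Crucially this depends on $k$ only through $|k|_p=p^{-\gamma+1}$, so every summand of $f_\gamma$ carries the \emph{same} eigenvalue $\lambda_\gamma=-p^{-\alpha(\gamma-1)}+(1-p^{-1})p^{-\alpha r}/(1-p^{-\alpha-1})$, which is exactly~(\ref{EV}). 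By linearity of $D^\alpha(B_r)$, the function $f_\gamma$ is itself an eigenfunction with eigenvalue $\lambda_\gamma$.

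For reality I would conjugate termwise, using $\overline{\chi(kx)}=\chi(-kx)$ and $\overline{\phi_\gamma(k)}=\phi_\gamma^*(k)$, and then substitute $k\mapsto -k$. This substitution is a bijection of $\mathbb{Q}_p/B_{-r}$ that preserves $|k|_p$ (hence the $\delta$-factor), and the hypothesis $\phi_\gamma(k)=\phi_\gamma^*(-k)$ turns the conjugated sum back into $f_\gamma$, so $f_\gamma$ is real-valued. Finally, orthogonality of $f_\gamma$ and $f_{\gamma'}$ for $\gamma\ne\gamma'$ is immediate: their supports in the dual variable are the disjoint spheres $\{|k|_p=p^{-\gamma+1}\}$ and $\{|k|_p=p^{-\gamma'+1}\}$, so the two functions are built from disjoint subfamilies of the orthonormal system of Corollary~\ref{col1} (equivalently, since $\gamma\mapsto\lambda_\gamma$ is strictly monotone, they are eigenfunctions of the symmetric operator $D^\alpha(B_r)$ for distinct eigenvalues). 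The only points demanding care are the verification that the $\delta$-constraint excludes $k=0$, so that the eigenvalue is uniform across the sphere, and the bookkeeping in the reality argument; both are routine, so I do not expect a genuine obstacle here.
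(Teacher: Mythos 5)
Your proposal is correct and follows essentially the same route as the paper: the paper gives no separate proof of Corollary~\ref{col2}, stating only that it follows from Corollary~\ref{col1} and Lemma~\ref{lem1}, and your argument is exactly the filled-in version of that deduction (finiteness of the sum over the dual sphere, the eigenvalue from Lemma~\ref{lem1} being constant on $|k|_p=p^{-\gamma+1}$ with the $\delta_{k,0}$ term absent, reality via the symmetry $\phi_\gamma(k)=\phi_\gamma^*(-k)$, and orthogonality from the disjointness of the dual supports together with Corollary~\ref{col1}).
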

In particular, taking
\[
\phi_{\gamma}\left(k\right)=p^{-r+\gamma-1},
\]
and employing the relation
\[
\sum_{k\in\mathbb{Q}_{p}/B_{-r}}\delta\left(|k|_{p}-p^{-\gamma+1}\right)\chi\left(-kx\right)=p^{r-\gamma+1}(1-p^{-1})\Omega\left(|x|_{p}p^{-\gamma+1}\right)-p^{r-\gamma}\delta\left(|x|_{p}-p^{\gamma}\right),
\]
we obtain
\begin{equation}
f_{\gamma}\left(x\right)=(1-p^{-1})\Omega\left(|x|_{p}p^{-\gamma+1}\right)-p^{-1}\delta\left(|x|_{p}-p^{\gamma}\right)\label{f_i}
\end{equation}
or
\begin{equation}
f_{\gamma}\left(x\right)=\Omega\left(|x|_{p}p^{-\gamma+1}\right)-p^{-1}\Omega\left(|x|_{p}p^{-\gamma}\right).\label{f_i-1}
\end{equation}
From the relations
\begin{equation}
\intop_{B_{r}}\left[\Omega\left(|x|_{p}p^{-\gamma+1}\right)-p^{-1}\Omega\left(|x|_{p}p^{-\gamma}\right)\right]^{2}d_{p}x=p^{\gamma-1}\left(1-p^{-1}\right),\label{f^2}
\end{equation}
\begin{equation}
\intop_{B_{r}}\left[\Omega\left(|x|_{p}p^{-\gamma+1}\right)-p^{-1}\Omega\left(|x|_{p}p^{-\gamma}\right)\right]\left[\Omega\left(|x|_{p}p^{-\delta+1}\right)-p^{-1}\Omega\left(|x|_{p}p^{-\delta}\right)\right]d_{p}x=p^{\gamma-1}\left(1-p^{-1}\right)\delta_{\gamma\delta}\label{f_f}
\end{equation}
it follows that the family of functions (\ref{f_i-1}) is the orthogonal family of eigenfunctions of operator (\ref{Vlad}) with eigenvalues~(\ref{EV}).

Since the Vladimirov operator  $D^{\alpha}\left(B_{r}\right)$ is invariant with respect to translations of coordinates and since any ball  $B_{r}$ of radius~$\gamma$
can be obtained from the ball  $B_{\gamma}$ by the translation $x\rightarrow x-n$,
where $n\in B_{r}/B_{\gamma}$, it follows that the functions
\begin{equation}
f_{\gamma,n}\left(x\right)=f_{\gamma}\left(x-n\right)=\left[\Omega\left(|x-n|_{p}p^{-\gamma+1}\right)-p^{-1}\Omega\left(|x-n|_{p}p^{-\gamma}\right)\right],\;\gamma\leq r,\;n\in B_{r}/B_{\gamma},\label{f_i_n}
\end{equation}
as obtained from functions (\ref{f_i-1}) by the translation $x\rightarrow x-n$,
are eigenfunctions of operator (\ref{Vlad}) with eigenvalues~(\ref{EV}). Next, the translation $x\rightarrow x-ap^{-\gamma}$,
$a=0,1,\ldots,p-1$, leaves the function $\Omega\left(|x-n|_{p}p^{-\gamma}\right)$ invariant, but changes the function $\Omega\left(|x-n|_{p}p^{-\gamma+1}\right)$.
Hence, the functions
\[
f_{\gamma,n,a}\left(x\right)=f_{\gamma,n}\left(x-ap^{-\gamma}\right)=f_{\gamma}\left(x-n-ap^{-\gamma}\right)=
\]
\begin{equation}
=\left[\Omega\left(|x-n{}^{-\gamma}-ap^{-\gamma}|_{p}p^{-\gamma+1}\right)-p^{-1}\Omega\left(|x-n|_{p}p^{-\gamma}\right)\right],\label{f_i_n_a}
\end{equation}
where $\gamma\leq r$ and $n\in B_{r}/B_{\gamma}$, are also eigenfunctions of operator~(\ref{Vlad}) with eigenvalues~(\ref{EV}).

\begin{theorem}
The functions (\ref{f_i_n_a}) and the constant function
\begin{equation}
f_{r}\left(x\right)=p^{-r}\label{fx}
\end{equation}
form a~complete system of vectors in the space $L^{2}\left(B_{r}\right)$.
\label{th4}
\end{theorem}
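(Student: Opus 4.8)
The plan is to prove completeness by a multiresolution (wavelet) argument in which the Vladimirov operator plays no role whatsoever; only the Hilbert-space structure of $L^2(B_r)$ matters. For each integer $\gamma\le r$ introduce the finite-dimensional space $V_\gamma\subset L^2(B_r)$ of functions that are constant on every ball $B_\gamma(m)$, $m\in B_r/B_\gamma$; equivalently $V_\gamma=\mathrm{span}\{\Omega(|x-m|_pp^{-\gamma}):m\in B_r/B_\gamma\}$, so $\dim V_\gamma=p^{\,r-\gamma}$. These are nested, $V_r\subset V_{r-1}\subset V_{r-2}\subset\cdots$, with $V_r$ the line of constants spanned by $f_r=p^{-r}$. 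The one analytic input I would invoke is that $\bigcup_{\gamma\le r}V_\gamma$ is dense in $L^2(B_r)$: indicators of balls span the locally constant functions, which are dense in $L^2$, and each such function lies in some $V_\gamma$.

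Next I would check that the wavelets at a fixed scale lie in the orthogonal complement $W_{\gamma-1}:=V_{\gamma-1}\ominus V_\gamma$. Reading (\ref{f_i_n_a}) as $f_{\gamma,n,a}(x)=\Omega(|x-n-ap^{-\gamma}|_pp^{-\gamma+1})-p^{-1}\Omega(|x-n|_pp^{-\gamma})$, each such function is a combination of indicators of balls of radius $p^{\gamma-1}$, hence belongs to $V_{\gamma-1}$. A one-line computation gives $\int_{B_r}f_{\gamma,n,a}\,d_px=p^{\gamma-1}-p^{-1}p^{\gamma}=0$, and since $f_{\gamma,n,a}$ is supported in $B_\gamma(n)$ this forces orthogonality to every $\Omega(|x-m|_pp^{-\gamma})$, i.e.\ to all of $V_\gamma$. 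Thus $f_{\gamma,n,a}\in W_{\gamma-1}$.

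The crux is to show the $f_{\gamma,n,a}$ actually \emph{span} $W_{\gamma-1}$, and here I would localize to a single ball $B_\gamma(n)$. The $p$ sub-balls $B_{\gamma-1}(n+ap^{-\gamma})$, $a=0,\dots,p-1$, partition $B_\gamma(n)$; writing $e_a=\Omega(|x-n-ap^{-\gamma}|_pp^{-\gamma+1})$ one has $f_{\gamma,n,a}=e_a-\bar e$ with $\bar e=p^{-1}\sum_b e_b$, so $\{f_{\gamma,n,a}\}_a$ spans exactly the mean-zero subspace of $\mathrm{span}\{e_a\}$. Because a function of $V_{\gamma-1}$ lies in $W_{\gamma-1}$ precisely when its restriction to each $B_\gamma(m)$ has integral zero, $W_{\gamma-1}$ is the orthogonal sum over $m\in B_r/B_\gamma$ of these per-ball mean-zero subspaces, and is therefore spanned by $\{f_{\gamma,m,a}\}_{m,a}$. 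The relation $\sum_a f_{\gamma,n,a}=0$ merely shows the family is overcomplete, which is harmless for a spanning statement.

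Finally I would assemble the telescoping decomposition
\[
L^2(B_r)=\overline{\bigcup_{\gamma\le r}V_\gamma}=V_r\oplus\bigoplus_{\gamma\le r}W_{\gamma-1},
\]
and since $f_r$ spans $V_r$ while $\{f_{\gamma,n,a}\}_{n,a}$ spans $W_{\gamma-1}$ for every $\gamma\le r$, the closed linear span of $\{f_r\}\cup\{f_{\gamma,n,a}\}$ is all of $L^2(B_r)$, which is the assertion. The main obstacle is the third step; once the per-ball linear algebra and the identification of $W_{\gamma-1}$ with the orthogonal sum of per-ball mean-zero subspaces are in place, completeness is immediate. As an independent check one can match the counts: $\dim W_{\gamma-1}=(p-1)p^{\,r-\gamma}$ equals the $p\cdot p^{\,r-\gamma}$ wavelets at scale $\gamma$ minus the $p^{\,r-\gamma}$ relations.
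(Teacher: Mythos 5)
Your proof is correct, but it takes a genuinely different route from the paper's. The paper starts from the same analytic input (indicators of balls are total in $L^{2}\left(B_{r}\right)$), but then argues constructively: it invokes the invariance of the family $\{f_{\gamma,n,a}\}$ under the affine maps carrying one ball onto another to reduce the problem to the single indicator $\Omega\left(|x|_{p}\right)$, and for that one function it exhibits the explicit finite expansion $\Omega\left(|x|_{p}\right)=f_{r}\left(x\right)+\sum_{\gamma=1}^{r}p^{-\gamma+1}f_{\gamma,0,0}\left(x\right)$, i.e.\ formula (\ref{omega}). You instead run a multiresolution argument: nested spaces $V_{\gamma}$ of functions constant on balls of radius $p^{\gamma}$, the identification of $W_{\gamma-1}=V_{\gamma-1}\ominus V_{\gamma}$ with the orthogonal sum over $m\in B_{r}/B_{\gamma}$ of the per-ball mean-zero subspaces, the per-ball linear algebra showing that $\{f_{\gamma,m,a}\}_{a}$ spans exactly that mean-zero subspace, and the telescoping orthogonal decomposition $L^{2}\left(B_{r}\right)=V_{r}\oplus\overline{\bigoplus_{\gamma\le r}W_{\gamma-1}}$. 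Each approach buys something. The paper's explicit expansion is reused later: (\ref{omega}) drives the solution (\ref{solution}) of the Cauchy problem, and its analogue (\ref{omega_dec}) is the core of the proof of Theorem~\ref{th6}; your argument, being structural, produces no such coefficients. On the other hand, your route avoids the paper's invariance/reduction step, which as stated needs care (the dilation $x\rightarrow p^{-\gamma'}\left(x-n'\right)$ shifts the scale index $\gamma$, so the family attached to $B_{r}$ is mapped into itself only for one sign of $\gamma'$ and the index ranges must be tracked); it makes the orthogonality and the overcompleteness count ($p^{r-\gamma}$ relations $\sum_{a}f_{\gamma,n,a}=0$ per scale, matching $\dim W_{\gamma-1}=(p-1)p^{r-\gamma}$) transparent; and it is the form of the argument that extends essentially verbatim to $L^{2}\left(\mathbb{Q}_{p}\right)$ and to ultrametric spaces without a group structure, which is precisely the direction the paper points to in its Conclusions.
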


\begin{proof}
Since the set of characteristic functions of all balls forms a basis for $L^{2}\left(B_{r}\right)$, it suffices to prove that the characteristic function $\varOmega\left(|y-n'|_{p}p^{-\gamma'}\right)$
of any ball $B_{\gamma'}\left(n'\right)$, $\gamma'\leq r,\:n'\in B_{r}/B_{\gamma'}$ from $B_{r}$ can be represented as a~linear combination of functions
from~(\ref{f_i_n_a}):
\begin{equation}
C_{r}f_{r}\left(x\right)+\sum_{\gamma=-\infty}^{r}\sum_{n\in B_{r}/B_{\gamma}}\sum_{b=1}^{p-1}C_{\gamma,n,b}f_{\gamma,n,b}\left(x\right)=\varOmega\left(|x-n'|_{p}p^{-\gamma'}\right).\label{decom}
\end{equation}
Next, since the set of functions $f_{\gamma,n,b}\left(x\right)$
is invariant under transformations  $x\rightarrow p^{-\gamma'}\left(x-n\right)'$,
it suffices to check (\ref{decom}) with fixed $\gamma'$ and~$n'$ (for example, with $\gamma'=0$ and $n'=0$; that is, for $\varOmega\left(|x|\right)$).
It is easily seen that
\begin{equation}
\Omega\left(|x|_{p}\right)=f_{r}\left(x\right)+\sum_{\gamma=1}^{r}p^{-\gamma+1}f_{\gamma,0,0}\left(x\right),\label{omega}
\end{equation}
which concludes the proof of Theorem~\ref{th4}.
\end{proof}

\begin{remark}
The complete system of functions $\left\{ f_{r},\left\{ f_{\gamma,n,a}\left(x\right)\right\} \right\} $
is overcomplete in $L^{2}\left(B_{r}\right)$, because in a~family  of~$p$ functions $f_{\gamma,n,a}\left(x\right)$, where $a=0,1,\ldots,p-1$
and  $i$ and $n$~are fixed, only $p-1$ functions are linearly independent. This follows from the relation
\[
\sum_{a=0}^{p-1}f_{\gamma,n,a}\left(x\right)=0.
\]
\label{rem1}
\end{remark}

The next theorem gives a construction of an orthonormal basis for $L^{2}\left(B_{r}\right)$.

\begin{theorem}
The functions
\begin{equation}
\varphi_{\gamma,n,b}\left(x\right)=\dfrac{p^{-\left(\gamma-1\right)/2}}{k}\left(f_{\gamma,n,0}\left(x\right)+kf_{\gamma,n,b}\left(x\right)\right),\:\gamma\leq r,\:b=1,\ldots,p-1\label{bas1}
\end{equation}
and the function
\begin{equation}
\varphi_{r}\left(x\right)=p^{-r/2}\label{bas2},
\end{equation}
where $k=-1\pm\sqrt{p}$, form an orthonormal basis for $L^{2}\left(B_{r}\right)$.
\label{th5}
\end{theorem}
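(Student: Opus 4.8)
The plan is to recognize the system of Theorem~\ref{th5} as a scale-by-scale orthonormalization of the overcomplete family $\{f_r\}\cup\{f_{\gamma,n,a}\}$ furnished by Theorem~\ref{th4} and Remark~\ref{rem1}. Since orthonormality and completeness are separate assertions, I would first show that $\varphi_r$ together with the $\varphi_{\gamma,n,b}$ are pairwise orthogonal and of unit norm, and then deduce completeness from that already proved in Theorem~\ref{th4}. Throughout I would use the explicit step-function description: $f_{\gamma,n,a}$ equals $1-p^{-1}$ on the sub-ball $B_{\gamma-1}(n+ap^{-\gamma})$ and $-p^{-1}$ on the rest of $B_\gamma(n)$, so that $\int_{B_r}f_{\gamma,n,a}\,d_px=0$ and $\sum_{a=0}^{p-1}f_{\gamma,n,a}=0$.

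The core computation is internal to a single family, i.e.\ for fixed $\gamma$ and $n$. Writing $A=\langle f_{\gamma,n,a},f_{\gamma,n,a}\rangle$ and $C=\langle f_{\gamma,n,a},f_{\gamma,n,a'}\rangle$ for $a\ne a'$, the step-function form together with~(\ref{f^2}) gives $A=p^{\gamma-1}(1-p^{-1})=p^{\gamma-2}(p-1)$ and, by a direct count of the overlapping sub-balls, $C=-p^{\gamma-2}$. Setting $g_b:=f_{\gamma,n,0}+kf_{\gamma,n,b}$, for $b\ne b'$ one finds $\langle g_b,g_{b'}\rangle=A+2kC+k^2C$, and requiring this to vanish reduces, after dividing by $p^{\gamma-2}$, to $(k+1)^2=p$, which is exactly the defining relation $k=-1\pm\sqrt p$. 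This is the step that forces the particular value of $k$. With that relation in hand the diagonal term $\langle g_b,g_b\rangle=A+2kC+k^2A$ simplifies, using $1+k^2=p-2k$, to $p^{\gamma-1}k^2$; hence the prefactor $p^{-(\gamma-1)/2}/k$ is precisely what makes $\|\varphi_{\gamma,n,b}\|=1$.

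Next I would treat orthogonality between different families and with the constant. Orthogonality to $\varphi_r=p^{-r/2}$ is immediate from $\int f_{\gamma,n,a}=0$. For two distinct families the supports $B_\gamma(n)$ and $B_{\gamma'}(n')$ are, by ultrametricity, either disjoint, in which case the inner product is trivially zero, or strictly nested, say $B_{\gamma'}(n')\subsetneq B_\gamma(n)$ with $\gamma'<\gamma$. In the nested case $B_{\gamma'}(n')$ has radius $p^{\gamma'}\le p^{\gamma-1}$ and therefore lies inside a single sub-ball $B_{\gamma-1}$, so every $f_{\gamma,n,a}$ is constant on it, while $f_{\gamma',n',a'}$ has zero integral there; hence $\langle f_{\gamma,n,a},f_{\gamma',n',a'}\rangle=0$. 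Since each $\varphi$ is a linear combination of $f$'s from its own family, all cross-family inner products vanish.

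Finally, completeness follows from Theorem~\ref{th4}. Within each family the $p-1$ functions $\varphi_{\gamma,n,b}$ lie in the span of $\{f_{\gamma,n,a}\}_{a=0}^{p-1}$, which by Remark~\ref{rem1} is $(p-1)$-dimensional; being $p-1$ orthonormal, hence independent, vectors in that space they form a basis of it, and $\varphi_r$ spans the same line as $f_r$. Thus the closed linear span of $\{\varphi_r\}\cup\{\varphi_{\gamma,n,b}\}$ coincides with that of the complete system $\{f_r\}\cup\{f_{\gamma,n,a}\}$, namely all of $L^2(B_r)$; an orthonormal system with dense span is an orthonormal basis. I expect the main obstacle to be the within-family inner-product bookkeeping of the second paragraph, namely keeping the two distinct values $A$ and $C$ straight and verifying that the orthogonality requirement collapses exactly to $(k+1)^2=p$ while the same relation normalizes $g_b$; the cross-family orthogonality and the completeness are then comparatively routine consequences of ultrametricity and Theorem~\ref{th4}.
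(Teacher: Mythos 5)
Your proposal is correct, and its core is the same computation as the paper's proof: the within-family Gram matrix $\langle f_{\gamma,n,a},f_{\gamma,n,a'}\rangle=p^{\gamma-1}\left(\delta_{aa'}-p^{-1}\right)$ (your $A$ and $C$ are exactly its diagonal and off-diagonal entries), after which orthogonality of the $\varphi_{\gamma,n,b}$ within a family collapses to the relation $p-1-2k-k^{2}=0$, i.e.\ $(k+1)^{2}=p$, and the same relation yields $\|\varphi_{\gamma,n,b}\|=1$; the paper handles the diagonal and off-diagonal cases in one formula via $\delta_{b,b'}$, while you split them, but that is a presentational difference only. Where you go beyond the paper: the paper merely asserts orthogonality across different $(\gamma,n)$ and against the constant $\varphi_{r}$, whereas you justify it via ultrametric nesting of supports and the vanishing integral $\int f_{\gamma,n,a}\,d_{p}x=0$; and the paper's proof never addresses completeness at all (it only verifies orthonormality, leaving the "basis" claim to rest implicitly on Theorem~\ref{th4} and the inversion formulas of Remark~\ref{rem2}), whereas your per-family dimension count (the $p-1$ orthonormal $\varphi$'s span the $(p-1)$-dimensional span of $\{f_{\gamma,n,a}\}_{a}$ guaranteed by Remark~\ref{rem1}) supplies that missing step cleanly. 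So your write-up is not just equivalent but strictly more complete than the paper's own proof.
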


\begin{proof}
Since the functions $f_{\gamma,n,a}\left(x\right)$ and $f_{r}=p^{-r}$ are orthogonal with different $i$ and~$n$, it follows that
the functions  $\varphi_{\gamma,n,b}\left(x\right)$ and $\varphi_{r}\left(x\right)$ are also orthogonal with different $i$ and~$n$. We have
\[
\intop_{B_{r}}d_{p}x\varphi_{\gamma,n,b}\left(x\right)\varphi_{\gamma,n,b'}\left(x\right)=
\]
\[
=\dfrac{p^{-\gamma+1}}{k^{2}}\intop_{B_{r}}d_{p}x\left[f_{\gamma,n,0}\left(x\right)+kf_{\gamma,n,b}\left(x\right)\right]\left[f_{\gamma,n,0}\left(x\right)+kf_{\gamma,n,b'}\left(x\right)\right]=
\]
\[
=\dfrac{p^{-\gamma+1}}{k^{2}}\intop_{B_{r}}d_{p}x\left[f_{\gamma,n,0}\left(x\right)f_{\gamma,n,0}\left(x\right)+kf_{\gamma,n,b}\left(x\right)f_{\gamma,n,0}\left(x\right)+kf_{\gamma,n,b'}\left(x\right)f_{\gamma,n,0}\left(x\right)+k^{2}f_{\gamma,n,b}\left(x\right)f_{\gamma,n,b'}\left(x\right)\right].
\]
Taking into account that
\[
\intop d_{p}xf_{\gamma,n,a}\left(x\right)f_{\gamma,n,b}\left(x\right)=
\]
\[
=\intop_{B_{r}}d_{p}x\left[\Omega\left(|x-n-ap^{-\gamma}|_{p}p^{-\gamma+1}\right)-p^{-1}\Omega\left(|x-n|_{p}p^{-\gamma}\right)\right]\times
\]
\[
\times\left[\Omega\left(|x-n-bp^{-\gamma}|_{p}p^{-\gamma+1}\right)-p^{-1}\Omega\left(|x-n|_{p}p^{-\gamma}\right)\right]=
\]
\[
=\left(p^{\gamma-1}\delta_{ab}-2p^{-1}p^{\gamma-1}+p^{-2}p^{\gamma}\right)=p^{\gamma-1}\left(\delta_{ab}-p^{-1}\right)
\]
we get
\[
\intop_{B_{r}}d_{p}x\varphi_{\gamma,n,b}\left(x\right)\varphi_{\gamma,n,b'}\left(x\right)=
\]
\[
=\dfrac{p^{-\gamma+1}}{k^{2}}p^{\gamma-1}\left(1-p^{-1}-2p^{-1}k+\left(\delta_{b,b'}-p^{-1}\right)k^{2}\right)=
\]
\[
=\dfrac{1}{k^{2}}\left(p^{-1}\left(p-1-2k-k^{2}\right)+\delta_{b,b'}k^{2}\right)=\delta_{b,b'},
\]
because $k=-1\pm\sqrt{p}$ is a~root of the equation $p-1-2k-k^{2}=0$.
This completes the proof of Theorem~\ref{th5}.
\end{proof}

\begin{remark}
We note the following formulas for the inverse transition from $\varphi_{\gamma,n,b}\left(x\right)$ to $f_{\gamma,n,a}\left(x\right)$:
\[
f_{\gamma,n,0}\left(x\right)=\dfrac{kp^{\left(\gamma-1\right)/2}}{p-k-1}\sum_{b=1}^{p-1}\varphi_{\gamma,n,b}\left(x\right),
\]
\[
f_{\gamma,n,b}\left(x\right)=p^{\left(\gamma-1\right)/2}\left(\varphi_{\gamma,n,b}\left(x\right)-\dfrac{1}{p-k-1}\sum_{b'=1}^{p-1}\varphi_{\gamma,n,b'}\left(x\right)\right),\:b=1,\ldots,p-1.
\]
\label{rem2}
\end{remark}

To demonstrate one of the advantages of the above basis~(\ref{bas1}), (\ref{bas2}), we consider the Cauchy problem for the Vladimirov equation
\begin{equation}
\dfrac{d}{dt}f\left(x,t\right)=D^{\alpha}\left(B_{r}\right)f\left(x\right)\label{Vl}
\end{equation}
with the initial condition
\[
f\left(x,0\right)=\Omega\left(|x|_{p}\right).
\]
Formally, the solution of equation (\ref{Vl}) can be written as
\[
f\left(x,t\right)=\exp\left(D^{\alpha}\left(B_{r}\right)t\right)f\left(x,0\right).
\]
Using expansion (\ref{omega})  we easily find the solution
\[
f\left(x,t\right)=\exp\left(D^{\alpha}\left(B_{r}\right)t\right)\left(f_{r}\left(x\right)+\sum_{\gamma=1}^{r}p^{-\gamma+1}f_{\gamma,0,0}\left(x\right)\right)=
\]
\begin{equation}
=\dfrac{1}{p^{r}}+\sum_{i=0}^{r-1}p^{-i}\left(\Omega\left(|x|_{p}p^{-i}\right)-p^{-1}\Omega\left(|x|_{p}p^{-i-1}\right)\right)\exp\left(-\left(p^{-i\alpha}-\left(1-p^{-1}\right)\dfrac{p^{-r\alpha}}{1-p^{-\alpha-1}}\right)t\right).\label{solution}
\end{equation}

\section{Real basis for $L^{2}\left(\mathbb{Q}_{p}\right)$ and its relation with the wavelet basis}

The following result holds.

\begin{theorem}
The functions
\begin{equation}
f_{\gamma,n,a}\left(x\right)=\left[\Omega\left(|x-np^{-\gamma}-ap^{-\gamma}|_{p}p^{-\gamma+1}
\right)-p^{-1}\Omega\left(|x-np^{-\gamma}|_{p}p^{-\gamma}\right)\right],\label{f_i_n_a_Q_p}
\end{equation}
where $\gamma\in\mathbb{Z}$, $n\in\mathbb{Q}_{p}/Z_{p}$ and $a=0,1,\ldots,p-1$,
are eigenfunctions of the Vladimirov operator (\ref{Vlad_Q_p}) on~$\mathbb{Q}_{p}$  with eigenvalues
\begin{equation}
\lambda_{\gamma}=-p^{-\alpha\left(\gamma-1\right)}\label{EV_Q_p}
\end{equation}
and form a~complete system of vectors in the space $L^{2}\left(\mathbb{Q}_{p}\right)$.
\label{th6}
\end{theorem}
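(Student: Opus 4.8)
The plan is to establish the two assertions of Theorem~\ref{th6} in turn: first that each $f_{\gamma,n,a}$ is an eigenfunction of the operator (\ref{Vlad_Q_p}) with eigenvalue (\ref{EV_Q_p}), and then that the family is complete in $L^{2}(\mathbb{Q}_{p})$. For the eigenvalue I would begin by recognizing (\ref{f_i_n_a_Q_p}) as a pure translate of the centred function $f_{\gamma}(x)=\Omega(|x|_{p}p^{-\gamma+1})-p^{-1}\Omega(|x|_{p}p^{-\gamma})$ from (\ref{f_i-1}). The two $\Omega$-terms in (\ref{f_i_n_a_Q_p}) seem to carry different centres, $(n+a)p^{-\gamma}$ and $np^{-\gamma}$; but $|ap^{-\gamma}|_{p}\le p^{\gamma}$, so by the ultrametric inequality $B_{\gamma}(np^{-\gamma})=B_{\gamma}((n+a)p^{-\gamma})$, whence $f_{\gamma,n,a}(x)=f_{\gamma}(x-(n+a)p^{-\gamma})$. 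Since (\ref{Vlad_Q_p}) is translation invariant, it suffices to act on $f_{\gamma}$. Using $F[\Omega(|x|_{p}p^{-m})](k)=p^{m}\Omega(|k|_{p}p^{m})$, I would compute $\tilde{f}_{\gamma}(k)=p^{\gamma-1}\,\delta(|k|_{p}-p^{-\gamma+1})$, supported on the single sphere $|k|_{p}=p^{-\gamma+1}$. As the Vladimirov operator acts in the Fourier variable as multiplication by $-|k|_{p}^{\alpha}$ (in the sign convention of (\ref{Vlad_Q_p}), consistent with the leading term of Lemma~\ref{lem1}, now with no boundary correction because the integration runs over all of $\mathbb{Q}_{p}$) and $|k|_{p}^{\alpha}=p^{-\alpha(\gamma-1)}$ on that sphere, I obtain $D^{\alpha}f_{\gamma}=-p^{-\alpha(\gamma-1)}f_{\gamma}$, which is (\ref{EV_Q_p}). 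Alternatively one may integrate (\ref{Vlad_Q_p}) directly, splitting on the value of $|x|_{p}$ exactly as in the proof of Lemma~\ref{lem1}; the $r$-dependent term there arises solely from the truncation to $B_{r}$ and vanishes in the present setting.

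For completeness I would reduce to showing that the indicator $\Omega(|x-b|_{p}p^{-m})$ of every ball $B_{m}(b)$ lies in the closed span of the family, since finite combinations of such indicators are dense in $L^{2}(\mathbb{Q}_{p})$. The key structural observation is that, for fixed $\gamma$, the centres $(n+a)p^{-\gamma}$ with $n\in\mathbb{Q}_{p}/\mathbb{Z}_{p}$ and $a\in\{0,\dots,p-1\}$ run through a complete set of coset representatives of $\mathbb{Q}_{p}/B_{\gamma-1}$: the digit $a$ occupies position $-\gamma$ and $n$ supplies all digits below, so these centres realize precisely those $c$ with $c_{j}=0$ for $j>-\gamma$. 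Consequently the balls $B_{\gamma-1}((n+a)p^{-\gamma})$ tile $\mathbb{Q}_{p}$, and for any centre $b$ and any $\gamma$ there is a member of the family equal to $f_{\gamma}(x-b)=\Omega(|x-b|_{p}p^{-\gamma+1})-p^{-1}\Omega(|x-b|_{p}p^{-\gamma})$, obtained by choosing the unique representative $c$ with $|b-c|_{p}\le p^{\gamma-1}$.

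The final step is a telescoping identity, the $\mathbb{Q}_{p}$-analogue of (\ref{omega}): for $M>m$,
\[
\sum_{\gamma=m+1}^{M}p^{-(\gamma-m-1)}f_{\gamma}(x-b)=\Omega(|x-b|_{p}p^{-m})-p^{-(M-m)}\Omega(|x-b|_{p}p^{-M}).
\]
The remainder has $L^{2}$-norm $p^{-(M-m)}(\mathrm{meas}\,B_{M})^{1/2}=p^{m-M/2}\to0$ as $M\to\infty$, so the series converges in $L^{2}$ to $\Omega(|x-b|_{p}p^{-m})$, placing every ball indicator in the closed span and proving completeness. I expect the main obstacle to lie exactly here: on the compact set $B_{r}$ one had to adjoin the constant $f_{r}$ (Theorem~\ref{th4}) to obtain completeness, whereas on $\mathbb{Q}_{p}$ no constant belongs to $L^{2}$. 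The reason none is needed is that the truncation remainder $p^{-(M-m)}\Omega(|x-b|_{p}p^{-M})$, which played the role of the constant in the compact case, now tends to zero in $L^{2}$ as the truncation radius grows; verifying that this tail genuinely vanishes in norm (not merely pointwise) is what makes the family complete without any added constant.
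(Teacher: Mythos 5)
Your proof is correct, and it shares the paper's overall skeleton (eigenfunction property, then completeness via expanding ball indicators in a geometric series of the $f_{\gamma,n,a}$), but both halves are executed by genuinely different means. For the eigenvalues, the paper simply declares $\mathbb{Q}_{p}$ to be the limit of $B_{r}$ as $r\rightarrow\infty$ and "repeats the argument" of Section~3, so that the boundary correction in (\ref{EV}) disappears; you instead work directly on $\mathbb{Q}_{p}$: after reducing to the centred function $f_{\gamma}$ by translation invariance (your observation that $B_{\gamma}(np^{-\gamma})=B_{\gamma}((n+a)p^{-\gamma})$, so $f_{\gamma,n,a}$ is a pure translate, is correct and is needed), you note that $\tilde{f}_{\gamma}(k)=p^{\gamma-1}\delta\left(|k|_{p}-p^{-\gamma+1}\right)$ is supported on a single sphere, on which the multiplier $-|k|_{p}^{\alpha}$ is the constant $-p^{-\alpha(\gamma-1)}$. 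This is the same mechanism as the paper's Corollary~\ref{col2} (sphere-supported Fourier data), but carried out with the continuous Fourier transform, which makes rigorous what the paper leaves as an informal limit. For completeness, both arguments rest on the expansion (\ref{omega_dec}); the paper first reduces to the single ball $\mathbb{Z}_{p}$ by invariance of the family under $x\mapsto p^{-\gamma'}x-n'$ and then verifies (\ref{omega_dec}) by showing $\left\Vert g-\Omega\right\Vert ^{2}=0$ through the orthogonality relations (\ref{f^2}), (\ref{f_f}), whereas you prove a general-centre telescoping identity whose partial sums leave the explicit remainder $p^{-(M-m)}\Omega\left(|x-b|_{p}p^{-M}\right)$ of $L^{2}$-norm $p^{m-M/2}\rightarrow0$, and you check (via the coset-representative count) that the required translates $f_{\gamma}(x-b)$ really occur in the family. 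Your route is more elementary — no orthogonality relations are needed, all balls are handled at once without the invariance reduction, and the explicit tail makes transparent why the constant function that had to be adjoined in Theorem~\ref{th4} is no longer needed on $\mathbb{Q}_{p}$; the paper's route instead reuses the machinery already established for $B_{r}$, which is why its write-up is shorter. Both are complete proofs of the theorem.
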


\begin{proof}
Since $\mathbb{Q}_{p}$ may be looked upon as the limit case of $B_{r}$ as $r\rightarrow\infty$, repeating the argument of the
previous section it is easily seen that functions  (\ref{f_i_n_a_Q_p})
are eigenfunctions of operator~(\ref{Vlad_Q_p}) with eigenvalues~(\ref{EV_Q_p}).

Next, since the \_\_[[[set of characteristic functions]]] \_\_ of all balls forms a basis for $L^{2}\left(\mathbb{Q}_{p}\right)$,
it suffices to show that the characteristic function $\varOmega\left(|x-n'p^{-\gamma'}|_{p}p^{-\gamma'}\right)$
of any ball  $B_{\gamma'}\left(n'p^{-\gamma'}\right)$, $\gamma'\in\mathbb{Z},\:n'\in\mathbb{Q}_{p}/Z_{p}$
from $\mathbb{Q}_{p}$ can be expanded into functions~(\ref{f_i_n_a}).
Since the set of functions $f_{\gamma,n,b}\left(x\right)$ is invariant under transformations $x\rightarrow p^{-\gamma'}x-n'$,
it will suffice to show that such an expansion holds for any fixed
$\gamma'$ and~$n'$; for example, for  $\gamma'=0$, $n'=0$; that is,
\begin{equation}
\sum_{\gamma=-\infty}^{\infty}\sum_{n\in\mathbb{Q}_{p}/Z_{p}}\sum_{b=1}^{p-1}C_{\gamma,n,b}f_{\gamma,n,b}\left(x\right)=\varOmega\left(|x|\right).\label{decomp}
\end{equation}
Consider the function $g\left(x\right)$ represented by the series
\begin{equation}
g\left(x\right)=\sum_{\gamma=1}^{\infty}p^{-\gamma+1}f_{\gamma,0,0}\left(x\right).\label{g(x)},
\end{equation}
which is uniformly convergent in~$x$.  Next, consider the squared norm
\[
\left\Vert g\left(x\right)-\varOmega\left(|x|\right)\right\Vert ^{2}=\intop_{\mathbb{Q}_{p}}d_{p}x\left(g\left(x\right)-\varOmega\left(|x|\right)\right)^{2}.
\]
Since the series in (\ref{g(x)}) is uniformly convergent and using  (\ref{f^2}) and (\ref{f_f}), we find that
\[
\left\Vert g\left(x\right)-\varOmega\left(|x|\right)\right\Vert ^{2}=
\]
\[
=\sum_{\gamma=1}^{\infty}p^{-\gamma+1}\sum_{\delta=1}^{\infty}p^{-\delta+1}\intop_{\mathbb{Q}_{p}}d_{p}xf_{\gamma,0,0}\left(x\right)f_{\delta,0,0}\left(x\right)-2\sum_{\gamma=1}^{\infty}p^{-\gamma+1}\intop_{\mathbb{Q}_{p}}d_{p}xf_{\gamma,0,0}\left(x\right)\varOmega\left(|x|\right)+\intop_{\mathbb{Q}_{p}}d_{p}x\varOmega\left(|x|\right)=
\]
\[
=\sum_{\gamma=1}^{\infty}p^{-\gamma+1}\sum_{\delta=1}^{\infty}p^{-\delta+1}\left(p^{\gamma-1}\left(1-p^{-1}\right)\delta_{\gamma\delta}\right)-2\sum_{\gamma=1}^{\infty}p^{-\gamma+1}\left(1-p^{-1}\right)+1=
\]
\[
=\left(1-p^{-1}\right)\sum_{\gamma=1}^{\infty}p^{-\gamma+1}-2\left(1-p^{-1}\right)\sum_{\gamma=1}^{\infty}p^{-\gamma+1}+1=
\]
\[
=-\left(1-p^{-1}\right)\sum_{\gamma=0}^{\infty}p^{-\gamma}+1=0.
\]
The norm of $g\left(x\right)-\varOmega\left(|x|\right)$ is zero, and hence
\[
g\left(x\right)=\varOmega\left(|x|\right),
\]
which shows that
\begin{equation}
\varOmega\left(|x|\right)=\sum_{\gamma=1}^{\infty}p^{-\gamma+1}f_{\gamma,0,0}\left(x\right)\label{omega_dec}.
\end{equation}
Consequently, (\ref{decomp})  holds. This completes the proof of the theorem.
\end{proof}

\begin{theorem}
The functions
\begin{equation}
\varphi_{\gamma,n,b}\left(x\right)=\dfrac{p^{-\left(\gamma-1\right)/2}}{k}
\left(f_{\gamma,n,0}\left(x\right)+kf_{\gamma,n,b}\left(x\right)\right),\:
\gamma\in\mathbb{Z},\:n\in\mathbb{Q}_{p}/Z_{p}, \ \ b=1,\ldots,p-1\label{bas}
\end{equation}
where $k=-1\pm\sqrt{p}$, form an orthonormal basis for $L^{2}\left(\mathbb{Q}_{p}\right)$.
\label{th7}
\end{theorem}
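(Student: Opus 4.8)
The plan is to verify the two defining properties of an orthonormal basis separately: first that the family $\{\varphi_{\gamma,n,b}\}_{\gamma\in\mathbb{Z},\,n\in\mathbb{Q}_{p}/\mathbb{Z}_{p},\,1\le b\le p-1}$ is orthonormal, and then that its closed linear span exhausts $L^{2}(\mathbb{Q}_{p})$. The only structural change from Theorem~\ref{th5} is that $\gamma$ now runs over all of $\mathbb{Z}$ and no constant function is adjoined. I would point out at the outset that the disappearance of the constant term is precisely the content of the expansion~(\ref{omega_dec}) established in Theorem~\ref{th6}, where the residual constant $f_{r}$ of the compact case is shown to vanish in the $r\to\infty$ limit; thus the completeness half of the present theorem should be reducible to Theorem~\ref{th6}.

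For orthonormality I would split into two cases. When the two indices share the same scale and centre, that is $\gamma=\gamma'$ and $n=n'$, the inner product $\intop_{\mathbb{Q}_{p}}\varphi_{\gamma,n,b}\varphi_{\gamma,n,b'}\,d_{p}x$ reduces verbatim to the algebraic computation carried out in Theorem~\ref{th5}: using $\intop f_{\gamma,n,a}f_{\gamma,n,b}\,d_{p}x=p^{\gamma-1}(\delta_{ab}-p^{-1})$ together with the fact that $k=-1\pm\sqrt{p}$ is a root of $p-1-2k-k^{2}=0$ (equivalently $(k+1)^{2}=p$), one obtains $\delta_{b,b'}$. For distinct blocks $(\gamma,n)\ne(\gamma',n')$ it suffices, by bilinearity, to show $\intop f_{\gamma,n,a}f_{\gamma',n',a'}\,d_{p}x=0$ for all $a,a'$, and here I would exploit the ultrametric geometry of the supports $B_{\gamma}(np^{-\gamma})$ and $B_{\gamma'}(n'p^{-\gamma'})$. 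Two such balls are either disjoint, in which case the integral is trivially zero, or nested, say with $\gamma'<\gamma$; in the nested case the inner support lies inside a single sub-ball of radius $p^{\gamma-1}$ on which $f_{\gamma,n,a}$ is constant, while $\intop f_{\gamma',n',a'}\,d_{p}x=p^{\gamma'-1}-p^{-1}p^{\gamma'}=0$, so the product integrates to zero. The remaining possibility $\gamma=\gamma'$, $n\ne n'$ falls under the disjoint alternative, since balls of equal radius coincide only when $n=n'$ in $\mathbb{Q}_{p}/\mathbb{Z}_{p}$.

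For completeness I would show that the closed span of $\{\varphi_{\gamma,n,b}\}$ already contains every $f_{\gamma,n,a}$. The inverse transition formulas of Remark~\ref{rem2} are purely algebraic within a fixed block and therefore hold verbatim on $\mathbb{Q}_{p}$; they express each $f_{\gamma,n,0}$ and $f_{\gamma,n,b}$ as a finite combination of the $\varphi_{\gamma,n,b}$ carrying the same $\gamma,n$. Since Theorem~\ref{th6} asserts that $\{f_{\gamma,n,a}\}$ is a complete system in $L^{2}(\mathbb{Q}_{p})$, the span of the $\varphi$'s is dense, and combined with the orthonormality just proved this yields an orthonormal basis. (Alternatively one could identify each block space with a Kozyrev wavelet subspace and cite completeness of the wavelet basis, but the direct route through Theorem~\ref{th6} and Remark~\ref{rem2} is cleaner.)

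The main obstacle I anticipate lies not in any single step but in the careful justification of completeness in the non-compact setting: one must be certain that letting $r\to\infty$ genuinely removes the constant function without discarding any large-scale component of $L^{2}(\mathbb{Q}_{p})$. This is exactly where the norm identity $\Vert g-\varOmega\Vert^{2}=0$ from the proof of Theorem~\ref{th6} does the decisive work, and I would rely on it rather than attempt a direct verification that the $\varphi$'s resolve the identity.
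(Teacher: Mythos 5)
Your proposal is correct and takes essentially the same route the paper intends: the paper's entire proof of Theorem~\ref{th7} is the remark that it is ``similar to that of Theorems 5 and~6,'' and your argument --- same-block orthonormality via the Theorem~\ref{th5} computation with $k=-1\pm\sqrt{p}$, cross-block orthogonality from the disjoint-or-nested geometry of supports together with $\intop_{\mathbb{Q}_{p}}f_{\gamma',n',a'}\,d_{p}x=0$, and completeness by combining the inverse formulas of Remark~\ref{rem2} with the completeness of $\{f_{\gamma,n,a}\}$ from Theorem~\ref{th6} --- is exactly that proof carried out in detail. Indeed, you supply the pieces the paper leaves implicit (notably the ultrametric cross-block orthogonality argument), so nothing is missing.
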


The proof of Theorem~\ref{th7} is similar to that of Theorems 5 and~6.

\smallskip

Let us find a relation between the basis introduced above and the basis of $p$-adic wavelets in $L^{2}\left(\mathbb{Q}_{p}\right)$, which was proposed in \cite{Koz,Koz1,KK}.
The basis of $p$-adic wavelets is formed by the functions
\[
\psi_{\gamma,n,j}\left(x\right)=p^{-\gamma/2}\chi\left(p^{\gamma-1}j\left(x-p^{-\gamma}n\right)\right)\Omega\left(\left|p^{\gamma}x-n\right|_{p}\right),
\]
\begin{equation}
\gamma\in\mathbb{Z},\;n\in\mathbb{Q}_{p}/\mathbb{Z}_{p},\;j=1,\ldots,p-1.\label{V-B}
\end{equation}
Summing (\ref{V-B}) in $j$, we find that
\[
\sum_{j=1}^{p-1}\psi_{\gamma,n,j}\left(x\right)=p^{-\gamma/2}\Omega\left(\left|p^{\gamma}x-n\right|_{p}\right)\sum_{j=1}^{p-1}\chi\left(p^{\gamma-1}j\left(x-p^{-\gamma}n\right)\right)=
\]
\[
=p^{-\gamma/2}\left(\varOmega\left(\left|x-p^{-\gamma}n\right|p^{-\gamma+1}\right)-\delta\left(\left|x-p^{-\gamma}n\right|-p^{\gamma}\right)\right)\sum_{j=1}^{p-1}\chi\left(p^{\gamma-1}j\left(x-p^{-\gamma}n\right)\right)=
\]
\[
=p^{-\gamma/2+1}\left((1-p^{-1})\varOmega\left(\left|x-p^{-\gamma}n\right|p^{-\gamma+1}\right)-p^{-1}\delta\left(\left|x-p^{-\gamma}n\right|-p^{\gamma}\right)\right)=
\]
\[
=p^{-\gamma/2+1}\left(\varOmega\left(\left|x-p^{-\gamma}n\right|p^{-\gamma+1}\right)-p^{-1}\Omega\left(\left|x-p^{-\gamma}n\right|p^{-\gamma}\right)\right)=p^{-\gamma/2+1}f_{\gamma,n,0}\left(x\right).
\]
As a result, we may express $f_{\gamma,n,0}\left(x\right)$ as
\[
f_{\gamma,n,0}\left(x\right)=p^{\gamma/2-1}\sum_{j=1}^{p-1}\psi_{\gamma,n,j}\left(x\right)
\]
and find the functions $f_{\gamma,n,a}\left(x\right)$ and $\varphi_{\gamma,n,b}\left(x\right)$
with $\gamma\leq r,\:a=0,1,\ldots,p-1,\:b=1,\ldots,p-1$,
\[
f_{\gamma,n,a}\left(x\right)=p^{\gamma/2-1}\sum_{j=1}^{p-1}\psi_{\gamma,n,j}\left(x-ap^{-\gamma}\right),
\]
\[
\varphi_{\gamma,n,b}\left(x\right)=\dfrac{p^{-1/2}}{k}\sum_{j=1}^{p-1}
\left(\psi_{\gamma,n,j}\left(x\right)+k\psi_{\gamma,n,j}\left(x-bp^{-\gamma}\right)\right).
\]
We have
\[
\psi_{\gamma,n,j}\left(x-bp^{-\gamma}\right)=\psi_{\gamma,n,j}\left(x\right)\chi\left(p^{-1}jb\right),
\]
and hence,
\begin{equation}
f_{\gamma,n,a}\left(x\right)=p^{\gamma/2-1}\sum_{j=1}^{p-1}\chi\left(p^{-1}ja\right)\psi_{\gamma,n,j}
\left(x\right),\label{f-psi}
\end{equation}
\begin{equation}
\varphi_{\gamma,n,b}\left(x\right)=\dfrac{p^{-1/2}}{k}\sum_{j=1}^{p-1}\left(1+k\chi\left(p^{-1}jb\right)\right)\psi_{\gamma,n,j}\left(x\right).\label{phi-psi}
\end{equation}
Let us make the inverse transform from  $\varphi_{\gamma,n,b}\left(x\right)$ to $\psi_{\gamma,n,j}\left(x\right)$. We have
\[
\sum_{a=0}^{p-1}\chi\left(p^{-1}ja\right)\chi\left(-p^{-1}j'a\right)=\sum_{a=0}^{p-1}\chi\left(p^{-1}a\left(j-j'\right)\right)=p\delta_{j,j'},
\]
and hence,
\[
\sum_{a=0}^{p-1}\chi\left(-p^{-1}j'a\right)f_{\gamma,n,a}\left(x\right)=p^{\gamma/2-1}\sum_{a=0}^{p-1}\sum_{j=1}^{p-1}\chi\left(-p^{-1}j'a\right)\chi\left(p^{-1}ja\right)\psi_{\gamma,n,j}\left(x\right)=p^{\gamma/2}\psi_{\gamma,n,j'}\left(x\right).
\]
Therefore,
\begin{equation}
\psi_{\gamma,n,j}\left(x\right)=p^{-\gamma/2}\sum_{a=0}^{p-1}\chi\left(-p^{-1}j'a\right)f_{\gamma,n,a}\left(x\right).\label{psi-f}
\end{equation}
Expressing in (\ref{psi-f}) $f_{\gamma,n,a}\left(x\right)$ it terms of $\varphi_{\gamma,n,b}\left(x\right)$, this gives
\[
\psi_{\gamma,n,j}\left(x\right)=p^{-\gamma/2}f_{\gamma,n,0}\left(x\right)+p^{-\gamma/2}\sum_{b=1}^{p-1}\chi\left(-p^{-1}jb\right)f_{\gamma,n,a}\left(x\right)=
\]
\[
=\dfrac{kp^{-1/2}}{p-k-1}\sum_{b=1}^{p-1}\varphi_{\gamma,n,b}\left(x\right)+p^{-1/2}\sum_{b=1}^{p-1}\chi\left(-p^{-1}jb\right)\left(\varphi_{\gamma,n,b}\left(x\right)-\dfrac{1}{p-k-1}\sum_{b'=1}^{p-1}\varphi_{\gamma,n,b'}\left(x\right)\right)=
\]
\[
=\dfrac{kp^{-1/2}}{p-k-1}\sum_{b=1}^{p-1}\varphi_{\gamma,n,b}\left(x\right)+p^{-1/2}\left(\sum_{b=1}^{p-1}\chi\left(-p^{-1}jb\right)\varphi_{\gamma,n,b}\left(x\right)+\dfrac{1}{p-k-1}\sum_{b=1}^{p-1}\varphi_{\gamma,n,b}\left(x\right)\right)
\]
Finally, we have
\begin{equation}
\psi_{\gamma,n,j}\left(x\right)=p^{-1/2}\sum_{b=1}^{p-1}\left(\dfrac{k+1}{p-k-1}+\chi\left(-p^{-1}jb\right)\right)\varphi_{\gamma,n,b}\left(x\right).\label{psi-fi}
\end{equation}
Formulas (\ref{f-psi})--(\ref{psi-fi}) establish a relation between the
basis introduced above and the basis of wavelets for $L^{2}\left(\mathbb{Q}_{p}\right)$.

In \cite{Koz,Koz1} it was shown that the wavelet basis can be used to diagonalize not only
the pseudo-differential Vladimirov operator but also the class of pseudo-differential operators of the form
\begin{equation}
Tf\left(x\right)=\intop_{\mathbb{Q}_{p}}T\left(x,y\right)\left(f(y)-f(x)\right)dy,\label{T}
\end{equation}
where the kernel $T\left(x,y\right)$ satisfies the conditions:
\begin{enumerate}
\item $T\left(x,y\right)=T\left(y,x\right),$
\item $T\left(x,y\right)=\operatorname{const}$ on the set of points satisfying the equation
$|x-y|_{p}=\operatorname{const}$.
\end{enumerate}
Such a kernel can be written in a general form
\begin{equation}
T\left(x,y\right)=\sum_{\gamma\in\mathbb{Z}}\sum_{n\in\mathbb{Q}_{p}/Z_{p}}T^{\left(\gamma,n\right)}\delta\left(|x-y|_{p}-p^{\gamma}\right)\Omega\left(|x-np^{-\gamma}|_{p}p^{-\gamma}\right).\label{Ker_T}
\end{equation}
In \cite{Koz,Koz1} it is also shown that wavelets (\ref{V-B}) are eigenfunctions of operator (\ref{T}) with kernel~(\ref{Ker_T}) and eigenvalues
\begin{equation}
\lambda_{\gamma n}=-p^{\gamma}T^{\left(\gamma,n\right)}-\left(1-p^{-1}\right)\sum_{\gamma'=\gamma+1}^{\infty}p^{\gamma'}T^{\left(\gamma',p^{\gamma'-\gamma}n\right)}.\label{EV_K}
\end{equation}
From formulas (\ref{f-psi}) and (\ref{phi-psi}), which express the relation of the overcomplete
system of functions $f_{\gamma,n,a}\left(x\right)$ and the orthonormal
basis $\varphi_{\gamma,n,b}\left(x\right)$ in terms of the functions of the wavelet basis  $\psi_{\gamma,n,j}\left(x\right)$, it follows that functions
(\ref{f_i_n_a_Q_p}) and (\ref{bas}) are also eigenfunctions of the pseudo-differential operator~(\ref{T}) with kernel~(\ref{Ker_T}) and eigenvalues~(\ref{EV_K}).

To demonstrate the application of basis (\ref{bas}) we consider the Cauchy problem for the equation
\begin{equation}
\dfrac{d}{dt}f\left(x,t\right)=Tf\left(x\right)\label{KP_T}
\end{equation}
with the pseudo-differential operator~(\ref{T}) with kernel~(\ref{Ker_T}) and the initial condition
\[
f\left(x,0\right)=\Omega\left(|x|_{p}\right).
\]

A formal solution of (\ref{KP_T}) may be written as
\[
f\left(x,t\right)=\exp\left(Tt\right)f\left(x,0\right).
\]
Hence, using expansion (\ref{omega_dec}), one may write this solution as a series
\[
f\left(x,t\right)=\exp\left(Tt\right)\left(\sum_{\gamma=1}^{\infty}p^{-\gamma+1}f_{\gamma,0,0}\left(x\right)\right)=
\]
\begin{equation}
=\sum_{\gamma=1}^{\infty}p^{-\gamma+1}\left(\Omega\left(|x|_{p}p^{-\gamma+1}\right)-p^{-1}\Omega\left(|x|_{p}p^{-\gamma}\right)\right)\exp\left(-\left(p^{\gamma}T^{\left(\gamma,n\right)}+\left(1-p^{-1}\right)\sum_{\gamma'=\gamma+1}^{\infty}p^{\gamma'}T^{\left(\gamma',p^{\gamma'-\gamma}n\right)}\right)t\right).\label{solution-1}
\end{equation}

\section{Conclusions}

In the present paper we construct new real bases formed by
eigenfunctions of the Vladimirov operator on a~compact set  $B_{r}\subset \mathbb{Q}_{p}$
and on the field $\mathbb{Q}_{p}$. It should be noted that all the previously available bases of the Vladimirov operator (\cite{Koch,V1,V2,V3,Koz,Koz1,KK}) were expressed
in terms of the characters of the field $\mathbb{Q}_{p}$ as an additive group.
A~similar construction can be used only on ultrametric spaces which feature a~group structure. Such spaces include the field  $\mathbb{Q}_{p}$ of $p$-adic numbers,
the ring $\mathbb{Q}_{m}$ of  $m$-adic numbers \cite{DZ}, the ring
of $a$-adic numbers~\cite{HR}. All these spaces are homogeneous ultrametric spaces пространствами (that is, spaces, in which for any fixed
$r$~there exists a~number $N\left(r\right)$ such that
the ball $B_{r}(a)$ can be represented as a~union of  $N\left(r\right)$
balls of radii $r'<r$). The basis introduced in the present paper does not contain characters of the field $\mathbb{Q}_{p}$ as an additive group, and can be
entirely expressed in terms of the characteristic functions of  balls in $\mathbb{Q}_{p}$.
By this reason such a~construction can be used to construct bases for inhomogeneous infinite ultrametric spaces without group structure.
Similar bases on finite ultrametric spaces with arbitrary topology were studied in \cite{BG,Motyl}.
In this connection, such types of bases can be the starting point in solving the spectral problem for the analogues
of the Vladimirov operator in an ultrametric space not featuring a~group structure by reducing this problem to a~spectral problem of
some modification of the Vladimirov operator  in~$\mathbb{Q}_{p}$. The last problem will be addressed in a~forthcoming  paper~\cite{Z} by the authors.

\vspace{5mm}
{\bf Acknowledgements}

The authors are  grateful to Prof.\ A.\,R.~Alimov (Faculty of Mechanics and Mathematics,
Moscow State University) for his assistance in preparing the manuscript  and a~number of helpful comments.

\smallskip

The work was partially supported by the Ministry of Education and Science of of the Russian Federation under the Competitiveness Enhancement Program of SSAU for 2013--2020.


\begin{thebibliography}{10}
\bibitem{V} Vladimirov V. S., Generalized functions over the field
of $p$-adic numbers // Russian Math. Surveys. --1988. -- Vol 43, no.
5. -- P. 19--64.

\bibitem{VVZ} Vladimirov V. S., Volovich I. V., Zelenov E. I. $p$-Adic
Analysis and Mathematical Physics. -- Singapure: World Scientific
Publishing, 1994.

\bibitem{DKKV} Dragovich B., Khrennikov A. Yu., Kozyrev S. V. , Volovich
I. V. On $p$-adic mathematical physics // \emph{$p$}-Adic Numbers,
Ultrametric Analysis, and Applications. -- 2009. -- Vol. 1., no. 1.
-- P. 1--17.

\bibitem{S-Sh-1} Smolyanov O. G., Shamarov N. N. Feynman Formulas
and Path Integrals for Evolution Equations with the Vladimirov Operator
// Proceedings of the Steklov Institute of Mathematics. -- 2009. --
265. -- P. 217--228.

\bibitem{S-Sh-2} Smolyanov O. G., Shamarov N. N. Hamiltonian Feynman
formulas for equations containing the vladimirov operator with variable
coefficients // Dokl. Math. -- 2011. -- Vol. 84, no. 2. -- P. 689--694.

\bibitem{Koch} Kochubei A. N. Pseudodifferential equations and stochastics
over non-archimedtan fields. -- New York: Marcel Dekker, 2001.

\bibitem{S-H} Albeverio S., Khrennikov A. Yu., and Shelkovich V.
M. Theory of $p$-adic Distributions: Linear and Nonlinear Models.
-- Cambridge University Press, 2010.

\bibitem{ABKO} Avetisov V. A., Bikulov A. Kh., Kozyrev S. V., Osipov
V. A. $p$-Adic Models of Ultrametric Diffusion Constrained by Hierarchical
Energy Landscapes // J. Phys. A: Math. Gen. -- 2002. -- Vol. 35. --
P. 177--189.

\bibitem{ABO} Avetisov V. A., Bikulov A. Kh., Osipov V. A. $p$-Adic
Description of Characteristic Relaxation in Complex Systems // J.
Phys. A: Math. Gen. --2003. -- Vol. 36. -- P. 4239--4246.

\bibitem{ABZ} Avetisov V. A., Bikulov A. Kh., Zubarev A. P. First
Passage Time Distribution and the Number of Returns for Ultrametric
Random Walks // J. Phys. A: Math. Theor. -- 2009. -- Vol. 42. -- P.
085003 -- 085020.

\bibitem{ABZ_MIAN} Avetisov V. A., Bikulov A. Kh., Zubarev A. P.
Ultrametric random walk and dynamics of protein molecules // Proceedings
of the Steklov Institute of Mathematics. -- 2014. -- Vol. 285. --
P. 3--25.

\bibitem{ABZ_motor} Avetisov V. A., Bikulov A. Kh., Zubarev A. P.
Mathematical Modeling of Molecular ``nano-machines'' // Vestn. Samar.
Gos. Tekhn. Univ. Ser. Fiz.-Mat. Nauki. -- 2011. -- Vol. 1 (22). --
P. 9--15 (in Russian).

\bibitem{BZK} Bikulov A. Kh., Zubarev A. P., Kaidalova L. V. Hierarchic
Dynamic Model of Financial Market near Crash Points and $p$-adic
Mathematical Analysis // Vestn. Samar. Gos. Tekhn. Univ. -- 2006.
-- Vol. 42. -- P. 135--141 (in Russian).

\bibitem{V1} Vladimirov V. S. $p$-Adic analysis and $p$-adic quantum
mechanics. -- Ann. of the NY Ac. Sci.: Symposium in Frontiers of Math.,
1988.

\bibitem{V2} Vladimirov V. S. Ramified Characters of Idele Groups
of One-Class Quadratic Fields // Proc. Steklov Inst. Math. -- 1999.
--Vol. 224. -- P. 107--114.

\bibitem{V3} Vladimirov V. S. On the spectrum of some pseudodifferential
operators over the field of $p$-adic numbers // Leningrad Math. J.,
-- 1991. -- Vol. 2, no. 6. -- P. 1261--1278.

\bibitem{Koz} Kozyrev S. V. Wavelet theory as $p$-adic spectral
analysis // Izv. Math. -- 2002. -- 66:2. -- P. 367--376

\bibitem{Koz1} Kozyrev S. V. $p$-Adic Pseudodifferential Operators
and $p$-Adic Wavelets // Theoret. and Math. Phys. -- 2004. -- Vol.
138, no. 3. -- P. 322--332.

\bibitem{KK} Kozyrev S. V., Khrennikov A. Yu. Pseudodifferential
Operators on Ultrametric Spaces and Ultrametric Wavelets // Izv. RAN.
Ser. Mat. -- 2005. -- Vol. 69, no. 5. -- P. 133--148.

\bibitem{Bik_Vol} Bikulov A. Kh., Volovich I. V. $p$-adic Brownian
motion // Izv. RAN. Ser. Mat. -- 1997. -- Vol. 61, no. 3. -- P. 75--90.

\bibitem{DZ} Dolgopolov M. V., Zubarev A. P. Some Aspects of $m$-Adic
Analysis and Its Applications to $m$-Adic Stochastic Processes //
$p$-Adic Numbers, Ultrametric Analysis, and Applications. -- 2011.
-- Vol. 3. -- no. 1. -- P. 39--51.

\bibitem{HR} Hewitt E., Ross K.\,A., Abstract Harmonic Analysis. --
Springer-Verlag, 1987.

\bibitem{BG} Bachas C. P., Huberman B. A. Complexity and the Relaxation
of Hierarchical Structures // Phys. Rev. Lett. -- 1986. -- Vol. 57.
-- P. 1965--1969.

\bibitem{Motyl} Motyl W. G. Dynamics on random ultrametric spaces
// J. Phys. A: Math. Gen. -- 1987. -- Vol. 20. -- P. 5481--5488.


\bibitem{Z} Bikulov A. Kh. and Zubarev A. P. Application of $p$--Adic Analysis Methods in Describing Markov Processes on Ultrametric Spaces Isometrically Embedded into $Q_p$  // $p$--Adic Numbers, Ultrametric Analysis, and Applications. -- 2015. -- Vol. 7, no. 2. -- P. 111--122.
\end{thebibliography}
\end{document}